\newtheorem{assumption}{Assumption}
\newtheorem{theorem}{Theorem}
\newtheorem*{example*}{Example}
\newtheorem{proposition}{Proposition}
\def  \MYEQ{\stackrel{\text{?}}{=}}
\def\ben{\begin{eqnarray*}}
\def\een{\end{eqnarray*}}
\def  \logA{\log_{|\A|}}
\def  \logT{\log_{2}}
\def  \MYEQ{\stackrel{\text{?}}{=}}
\def  \F{\mathbb {F}}
\def  \F2{\mathbb {F}_2}
\def  \A{\mathbb {A}}
\def  \R{\mathbb {R}}
\def  \E{\mathbb {E}}
\def \cC{\mathcal{C}}
\def \logA{\log}
\def\GN{ G }
\def\HN{ H }
\def\Hmin{ H_{\min} }
\def\Hhalf{ H_{1/2} }
\def\LambdaN{ \Lambda^{N} }
\def\IN{ I^{N} }
\def\INLjoint{ I^{N,L} }
\def\INL{ I^{{N}^L} }
\def\LambdaNL{ {\Lambda^{{N}^L}} }
\def\IL{ I^L }
\def\LambdaL{ \Lambda^L }
\def\muL{ \mu^L }
\def\Noise{ N }
\def\UNoise{ N_1 }
\def\IU{ I^U }
\def\h2{ h_2 }
\def\gstar{ g^* }
\def\error{ \epsilon }
\def\errorAML{ \epsilon^\text{AB} }
\def\errorL{ \epsilon^L }
\def\CHARD{ C^\text{Hard} }
\def\CSOFT{ C^\text{Sym. Reliability} }
\def\CSOFT{ C^\text{Sym. Rel.} }
\def\GML{ D }
\def\IMLE{I^\text{D}}
\def\IMLEAB{I^\text{D-AB}}
\def\GAML{ D_\text{AB} }
\def\LC{ C }
\def  \codebook{\Phi}
\def  \MERR{ \text{MER} }
\def  \NORMAL{ \mathcal{N} }
\def  \Ninv{ F_{\NORMAL}^{-1} }
\begin{document}

\title{Guessing random additive noise decoding with  symbol reliability information (SRGRAND)}


\author{
\IEEEauthorblockN{Ken R. Duffy\IEEEauthorrefmark{1}, Muriel M\'edard\IEEEauthorrefmark{2} and Wei An\IEEEauthorrefmark{2}\\}
\IEEEauthorblockA{\IEEEauthorrefmark{1}Hamilton Institute, Maynooth University, Ireland. E-mail: ken.duffy@mu.ie.\\}
\IEEEauthorblockA{\IEEEauthorrefmark{2}Research Laboratory of Electronics, Massachusetts Institute of Technology, 
U.S.A.
E-mail: medard@mit.edu, wei\_an@mit.edu.}
}

\maketitle

\vspace{-0.9in}


\begin{abstract}
The design and implementation of error correcting codes has long been informed by two fundamental results: Shannon's 1948 capacity theorem, which established that long codes use noisy channels most efficiently; and Berlekamp, McEliece, and Van Tilborg's 1978 theorem on the NP-hardness of decoding linear codes. These results shifted focus away from creating code-independent decoders, but recent low-latency communication applications necessitate relatively short codes, providing motivation to reconsider the development of universal decoders. 

We introduce a scheme for employing binarized symbol soft information within Guessing Random Additive Noise Decoding, a universal hard detection decoder. We  incorporate codebook-independent quantization of soft information to indicate demodulated symbols to be reliable or unreliable. We introduce two decoding algorithms: one identifies a conditional Maximum Likelihood (ML) decoding; the other either reports a conditional ML decoding or an error. For random codebooks, we present error exponents and asymptotic complexity, and show benefits over hard detection.

As empirical illustrations, we compare performance with majority logic decoding of Reed-Muller codes, with Berlekamp-Massey decoding of Bose-Chaudhuri-Hocquenghem codes, with CA-SCL decoding of CA-Polar codes, and establish the performance of Random Linear Codes, which require a universal decoder and offer a broader palette of code sizes and rates than traditional codes.

\end{abstract}

{\bf Keywords: Universal decoding, symbol reliability information, random codes.}
\vspace{-0.2in}
\section{Introduction}

\let\thefootnote\relax\footnotetext{
A subset of these results was presented at the 2019 IEEE International
Symposium on Information Theory, Paris, France, \cite{Duffy19a} and
at the 2020 Annual Conference on Information Sciences and Systems,
Princeton, USA \cite{Duffy20}.
In this article lower case letters
correspond to realizations of upper-case random variables or their
normalized limits, apart from for noise where $z$ is used as $n$
denotes the code block-length. Logs are taken base $|\A|$ throughout,
and we assume that $0\in\A$ corresponds to no noise.}

Since Shannon's 1948 opus \cite{Shannon48} it has been known that
channel capacity, the highest rate that an error correcting code
can operate at while guaranteeing error-free communication over a
noisy channel, is governed by the Shannon entropy of the channel's
noise. By considering structureless random codes, his mathematical results proved that
channel capacity is only achievable in the limit as the length of
the error correcting code becomes large. By 1968, it was
confirmed that his core theorems hold if structureless random codes
are replaced with Random Linear Codes (RLCs)~\cite{Gal68}, which
offer a more efficiently stored codebook description. In 1978,
however, Berlekamp, McEliece, and Van Tilborg reported that maximum likelihood (ML) 
decoding of linear codes is an NP-complete
problem~\cite{berlekamp1978inherent}, establishing that there exists
a sequence of linear codes for which the decoding complexity is
exponential as a function of block length. This feature, which
underpins the McEliece cryptosystem \cite{mceliece1978public},
effectively halted practical consideration of universal decoding
algorithms, with a couple of notable exceptions recounted in the
Related Work.

The focus on long codes led to a working paradigm of  pairing structured codes and  code-specific decoders.  Examples of such pairings are Reed-Muller (RM)
codes~\cite{reed1954class,muller1954application} with Majority Logic
decoding, Reed-Solomon codes~\cite{reed1960polynomial} with
Berlekamp-Massey (BM) decoding \cite{berlekamp1968algebraic,massey1969shift,shu2004},
Low Density Parity Check Codes (LDPCs) \cite{gallager1963low} with
belief propagation decoding~\cite{fossorier1999reduced}, and, most
recently, CRC-Assisted Polar (CA-Polar) codes,  used
in control channel communications in 5G New Radio (NR), with CRC-Assisted
Successive Cancellation List (CA-SCL) decoding
~\cite{niu2012crc,tal2015list,balatsoukas2015llr}.  The
structured nature of these codes leads to restrictions
on lengths and rates. They are usually constructed based  on the
assumption of independent and identically
distributed noise, which is then approximated through
significant interleaving, with attendant delays. From
an implementation point of view,
distinct hardware is required for each code-decoder
pair, and sometimes for different rates of the same code-decoder
pair.

Many current communication systems 
require
low-latency operation, where small bursts of data need 
efficient transmission \cite{durisi2016toward,she2017radio,chen2018ultra}.
Indeed, ultra-reliable low-latency communication (URLLC) is an
important use-case in the 5G NR standard
\cite{parvez2018survey,medard20205}. Delivering URLLC necessitates
efficient decoding of short, high-rate codes,  motivating revisiting the possibility of high-accuracy universal
decoders. Guessing Random Additive Noise Decoding
(GRAND)~\cite{Duffy18,Duffy19,An20}, first proposed in 2018 for
hard detection channels, is a class of decoding algorithms that can
decode any code. GRAND's practical promise as a single efficient
mechanism for any moderate redundancy code is
such that circuit-based implementations have already been investigated
\cite{abbas2020high,Riaz21,abbas2021b} that avail of the inherent high level of
parallelizability of the algorithm. That work demonstrates GRAND's
performance credentials in hard detection channels, such as  data storage system applications or communication systems
with only hard detection demodulation.

GRAND's universal premise is that for a communication to be decodable
the received signal must faithfully contain information regarding
the transmitted code-word \emph{and} the error effect of the noise
experienced on the channel. While most decoding algorithms
utilize the codebook's structure to identify the transmitted
code-word, GRAND endeavors to find the effect of the noise 
and so recover the transmitted code-word.
To do this, it requires two devices: a method by which to query if
a string is an element of the codebook; 
and a mechanism to sequentially create putative noise-effect
sequences in decreasing order of their likelihood of occurrence on
the channel. Armed with these, GRAND aims to produce an error corrected
decoding for \emph{any} block code, without restriction to binary
or, indeed, linear codes.

\begin{algorithm}
\caption{Guessing Random Additive Noise Decoding. Given a demodulated
channel output $y^n$ and a function $\codebook$ such that $\codebook(y^n)=0$ if and only if
$y^n$ is in the codebook, $c^{n,*}$ is the first codebook element 
identified and $\GML^n$ is the number of codebook queries required to
identify it, serving as a measure of confidence.}
\label{alg:pseudo-code}
\begin{algorithmic}
\STATE {\bf Inputs}: $y^n$, $\codebook$
\STATE {\bf Output}: $c^{n,*}$, $\GML^n$
\STATE $d\leftarrow 0$, $\GML^n\leftarrow 0$.
\STATE $z^n\leftarrow$ next most likely noise effect sequence
\STATE $\GML^n\leftarrow\GML^n+1$ 
 \IF{$\codebook(y^n\ominus z^n) = 0$}
\STATE  $c^{n,*}\leftarrow y^n\ominus z^n$
\STATE{\bf return} $c^{n,*}$,  $\GML^n$
\ENDIF
\end{algorithmic}
\hrule
\end{algorithm}

Pseudo-code for GRAND can be found in Algorithm ~\ref{alg:pseudo-code},
where the key step is ``$z^n\leftarrow$ next most likely noise
effect sequence''. In the work that introduced GRAND the decoder only had access to a statistical description of
the channel and hard-detection information. In that setting, GRAND provides ML decoding so long as the
ordering of the putative noise effects matches the statistical
description of the channel, even for channels with temporal noise
correlations \cite{Duffy19}. For standard models of hard detection noise effects from a highly 
interleaved channel or one subject to Markovian burst errors 
without an interleaver, dynamically creating putative noise effects
is possible with simple logic \cite{An20}. That putative noise effects 
can be readily generated in parallel has been 
exploited in published hardware implementations of GRAND that perform multiple codebook
membership queries per clock-cycle \cite{abbas2020high,Riaz21,abbas2021b}. 

Incorporating soft information from per-realization
measurements of received signals is  known to be able to improve decoding significantly
 \cite{goldsmith2005wireless}, but it is unclear how to do
so  with GRAND. It seems fraught at first blush as soft information
seeks to represent continuous observations at the receiver, while
GRAND searches over a collection of discrete noise effects on
demodulated signals. A na\"ive approach, in which fine quantization
of noise leads to guessing over a larger space of possible noise
realizations, is inherently undesirable from a complexity perspective.

Here we consider the problem of incorporating binary symbol reliability information to GRAND where soft information per received symbol is limited to a single bit to indicate whether a demodulated symbol has been demodulated with confidence or not. Analysis of the resulting schema, Symbol Reliability GRAND (SRGAND), results in: a universal ML decoder 
conditioned on one bit of symbol reliability information per received symbol; simulated performance for established linear codes and for RLCs, 
which exist at all lengths and rates and have theoretically desirable properties \cite{coffey1990any}, but require a universal decoder; the mathematical evaluation of SRGRAND's complexity, showing an improvement vis-\`a-vis GRAND; error exponents for conditional ML decoding in the presence of a single bit of symbol reliability, and success exponents for the likelihood of correct decoding when the code-rate exceeds capacity.

\section{Symbol reliability.}
Essentially all digital communications involve taking discrete
data, channel coding them to add robustness to noise, and then
modulating those digital data into signals suitable for
transmission and reception. For example, Phase Shift Keying (PSK), widely used in wireless
communications systems, encodes groups of binary data into one
of a finite set of phases of a  carrier signal. To avail of better channel
conditions in practice, not only is the codebook rate increased,
but a modulation with a larger number of bits per modulated
symbol is also employed. An illustration of Quadrature PSK (QPSK) is provided in Fig. \ref{fig:0}. With transmitted symbols indicated by the red
dots, assuming all symbol transmissions are equally likely and they are disturbed by independent additive Gaussian channel noise (AWGN), the probability density of a received signal being observed is indicated by the heat maps in Fig. \ref{fig:0}
(a). Hard detection demodulation maps each received signal to the
nearest potentially-transmitted symbol. One metric of
confidence that a hard demodulated symbol corresponds to the transmitted
one is the minimum across all possible alternate symbols of the Likelihood Ratio (LR) that a received signal was
observed given the hard detection symbol was transmitted as compared
with the alternate. The resulting LR surface
is depicted in Fig. \ref{fig:0} (b). Instead of solely reporting the hard detection output, we envisage a further codebook independent quantization of the received signal
into a symbol reliability indicator that separates reliably received
symbols from unreliable ones. The principle behind the approach is
illustrated in Fig. \ref{fig:0} (c) where a thresholding of the LR
results in a masked region such that if a signal is received within
that region, the hard detection demodulated symbol is flagged as
being unreliable. The probability density of receiving a signal
conditional on being in the masked region of 
 potentially noise-impacted symbols is shown in Fig. \ref{fig:0}
(d). Thus, the uncertainty region corresponding to the
potentially noise-impacted symbols serves as a mask that labels
received symbols whose values are questionable, enabling the decoder
to focus on them. As the GRAND approach is codebook independent and
noise-centric, we establish that we can incorporate  symbol
reliability information in a way that results in reduced complexity.

\begin{figure*}
\vspace{-0.2in}
\begin{center}
\includegraphics[width=0.24\textwidth]{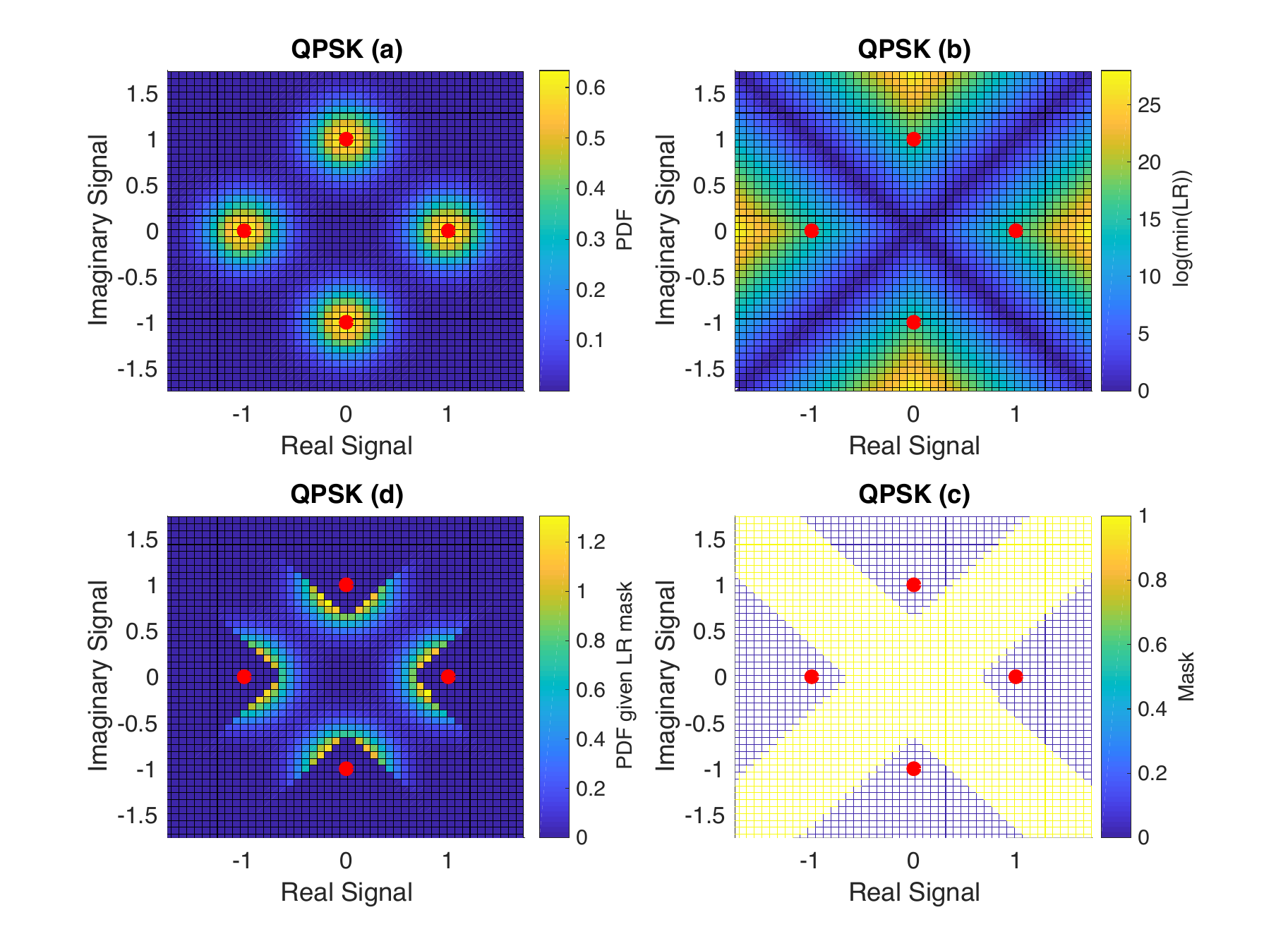}
\includegraphics[width=0.24\textwidth]{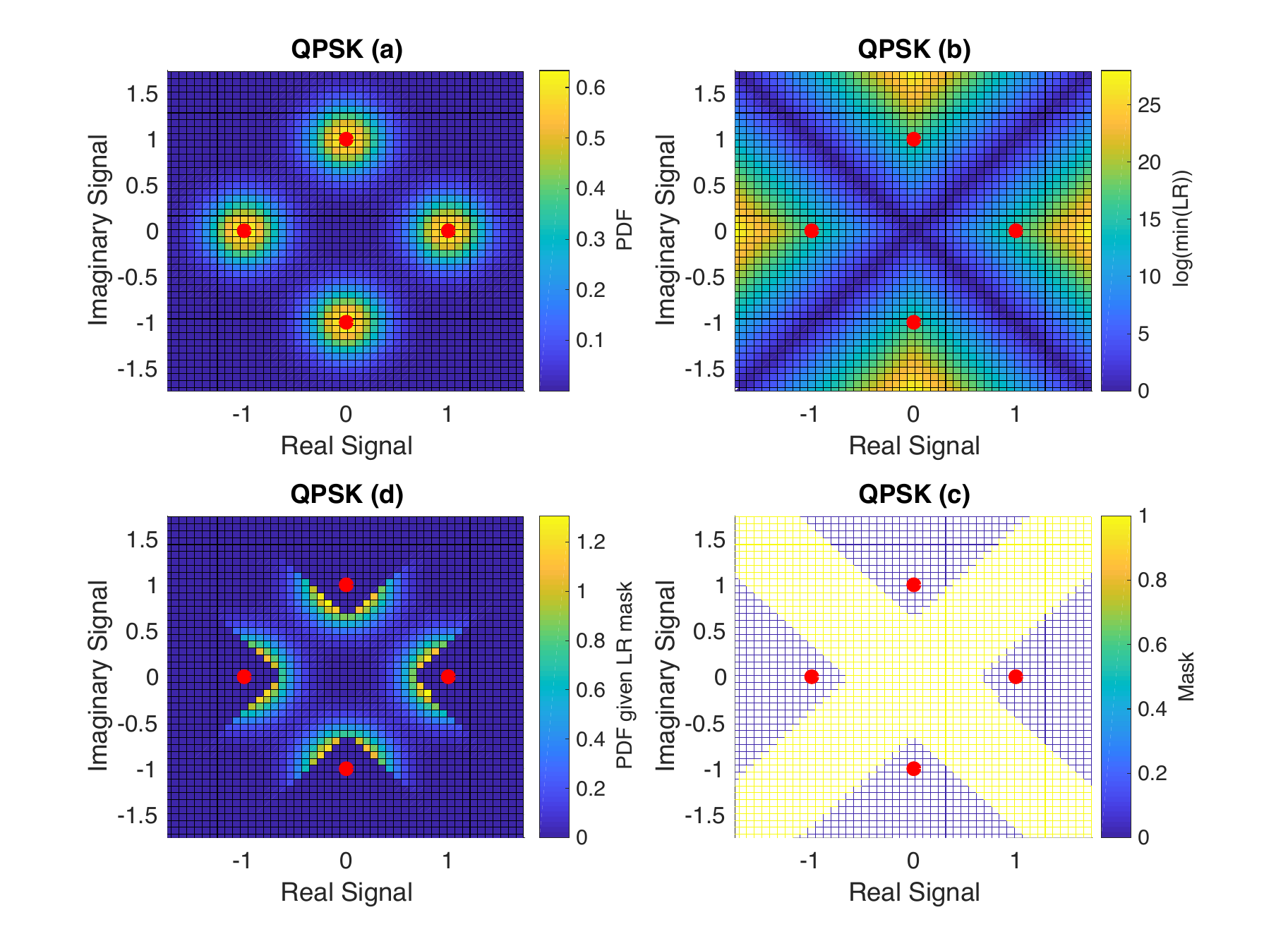}
\includegraphics[width=0.24\textwidth]{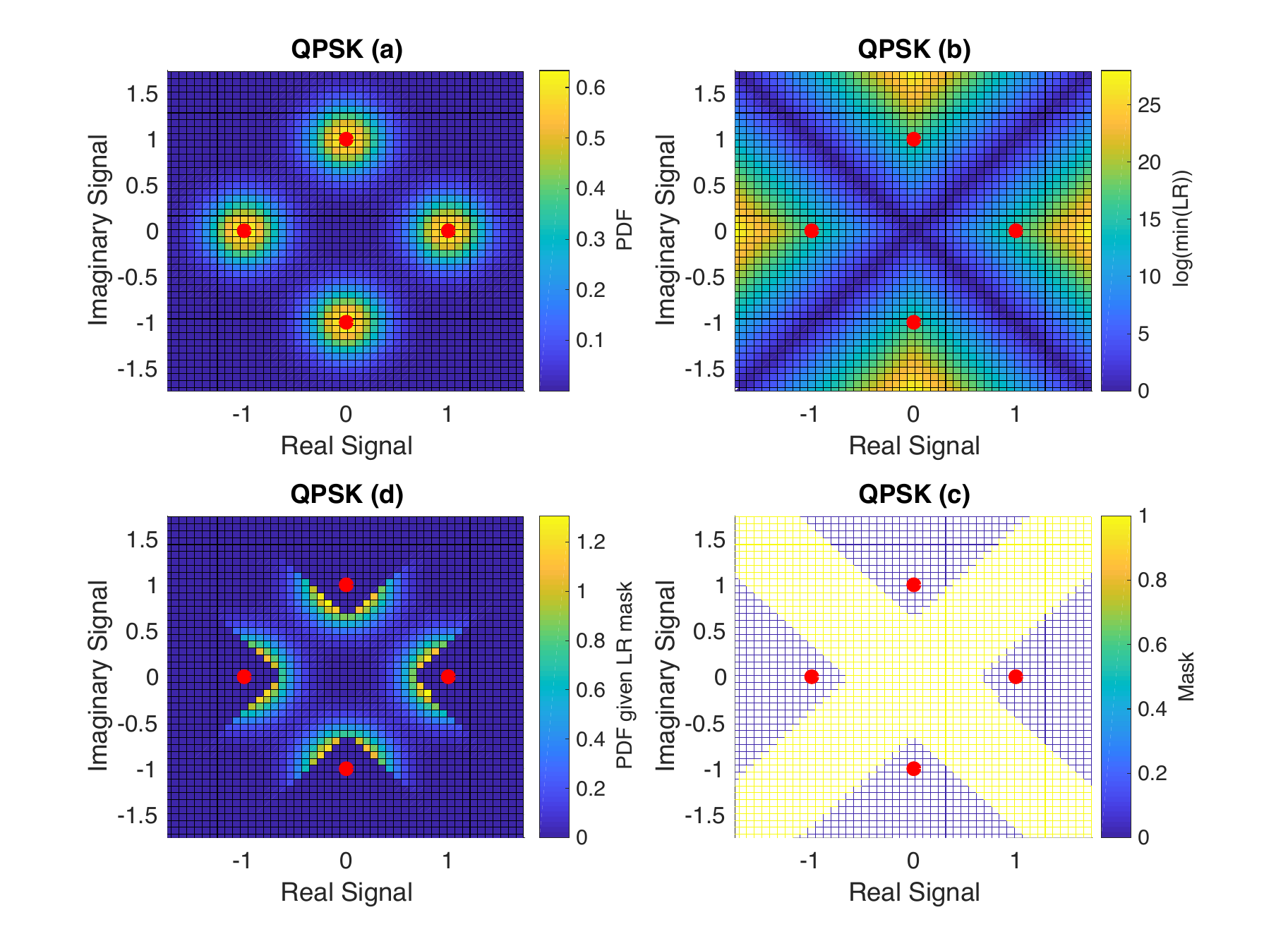}
\includegraphics[width=0.24\textwidth]{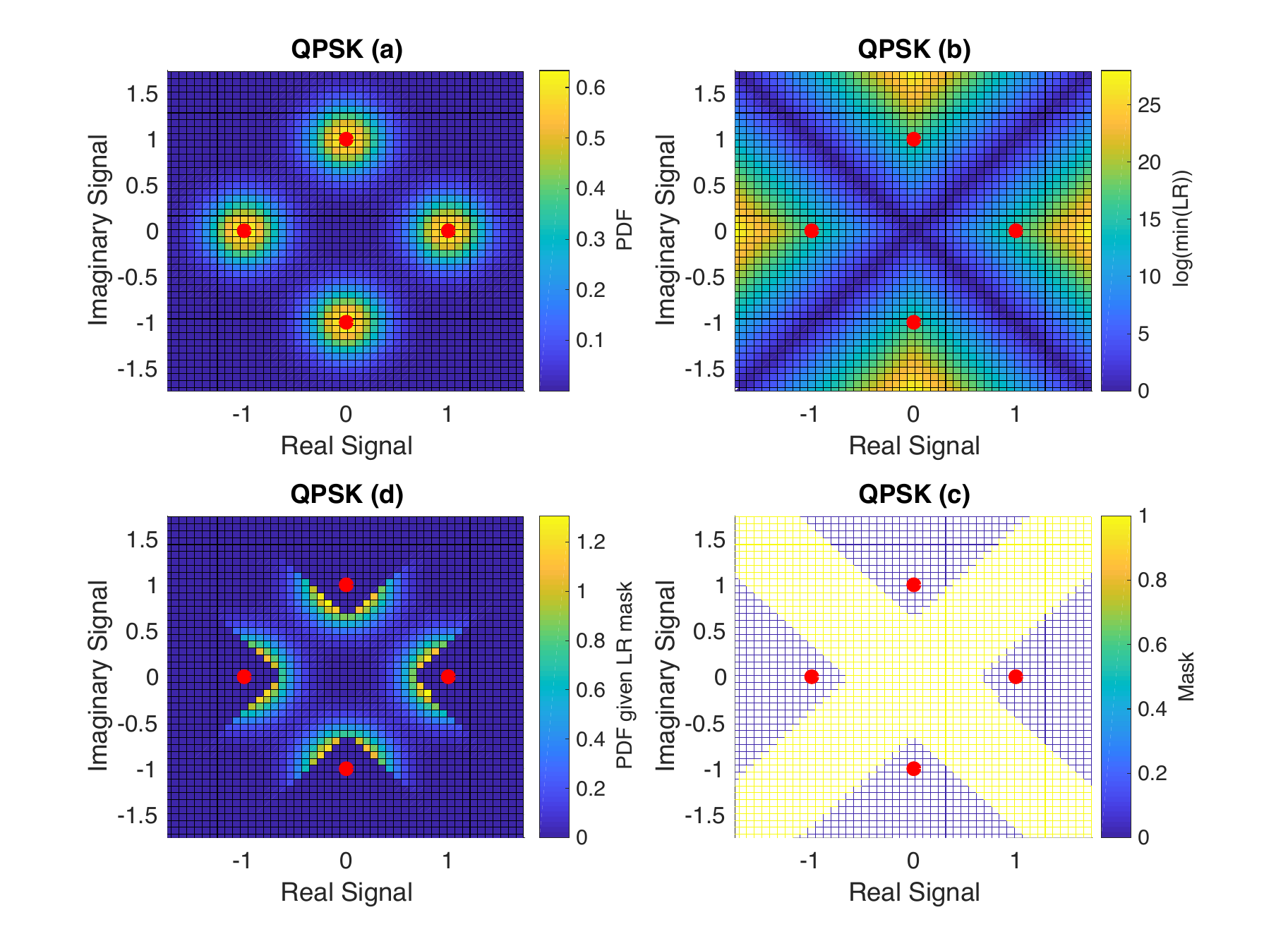}
\vspace{-0.2in}
\end{center}
\caption{QPSK subject to uncorrelated bivariate AWGN. Each pair
of bits is coded into one of four symbols, indicated by the red
dots. (a) displays a heat map of the probability
density that the received signal is at a given location. When hard
detection is employed, received signals are demodulated to the
symbol in the quadrant where the received signal is
measured and that is provided to the decoder. (b)
shows the  minimum Likelihood Ratio (LR) between each hard detection
symbol and all others as heat maps, providing a measure of
confidence in the hard demodulation. (c) displays a
mask of the LR surfaces in (b) where within the hatched area the LR is
greater than a threshold, and in the yellow mask area it is less than
the threshold, identifying the potentially noise-impacted symbols. In the symbol reliability quantization, symbols received in the yellow masked region are
demodulated but are also marked as being potentially noise-impacted. (d)  provides heat map views of the probability density function
of a received signal, conditioned on it being observed in the yellow masked
area of uncertainty.\vspace{-0.2in}}
\label{fig:0}
\end{figure*}

Our mathematical abstraction of this reliability information assumes that symbols received from the channel have been accurately indicated to be error free or to have possibly been
subjected to independent additive random noise. SRGRAND then provides an ML decoding conditioned on the mask that was provided. In practice, creation of this symbol reliability information corresponds to a situation where soft information, such as instantaneous Signal to Interference plus Noise Ratios (SINR), has been thresholded so as to provide false negatives with a sufficient small likelihood that poor masking, i.e. incorrect identification of potentially noise-impacted symbols, does not dominate the block error probability. In effect, this symbol reliability information is a codebook-independent quantization of soft information \cite{WS05}. In Section \ref{sec:perf}, using a simple threshold rule, for an additive white Gaussian noise channel we empirically find that the provision of symbol reliability information results in a 0.75 to 1 dB gain over optimal ML hard detection decoding, even when the symbol reliability information is potentially erroneous. 

\section{Guessing Random Additive Noise Decoding}
\label{sec:GRAND}

The contribution of the current article is to identify how to
incorporate symbol reliability information into the
GRAND approach, and the ensuant increased capacity, reduced block error
probability, and decreased complexity.
 We assume that, as well as being in receipt of
a channel output, $Y^n$, the receiver is provided with a vector of
symbol reliability information, $S^n$ taking values
in $\{0,1\}^n$ where a $0$ truthfully indicates a symbol has not
been subject to noise while a $1$ indicates it may have been. This
model is similar in spirit to the well-known Gilbert-Elliott model
\cite{Gilbert1960, Elliott1963}, although our results will hold for
channel state process $\{S^n\}$ that have more involved correlation
structures than Markovian. The core idea is that the vector $S^n$
be used as a mask that separates symbols that require guessing, since they are potentially noise-impacted, from
those that do not. 

\subsection{GRAND}

Consider a hard-detection channel with inputs, $X^n$, and outputs, $Y^n$, consisting
of blocks of $n$ symbols from a finite alphabet $\A=\{0,\ldots,|\A|-1\}$.
Assume that channel input is altered by random noise effects, $\Noise^n$, that are
independent of the channel input and also take values in $\A^n$.  
Assume  the function, $\oplus$, describing the channel's action, is invertible so that knowing the output and input the noise can be recovered:
\begin{align}
    \label{eq:channel}
    Y^n = X^n \oplus \Noise^n \text{ and } \Noise^n=Y^n \ominus X^n.
\end{align}
In the hard detection setting, the receiver
is solely provided with the discrete channel output $Y^n$.

Assuming
code-words are selected uniformly at random, to implement
ML decoding, the sender and receiver first
share a codebook $\cC_n=\{c^{n,1},\ldots,c^{n,M_n}\}$ consisting
of $M_n$ elements of $\A^n$. For a given channel output $y^n$,
denote the conditional probability of the received sequence given
the transmitted code-word was $c^{n,i}$ by $p_{Y^n|C^n}(y^n|c^{n,i})
= P(\Noise^n=y^n\ominus c^{n,i})$ for  $i\in\{1,\ldots,M_n\}$.  The
ML decoding is then 
\begin{align}
    \label{eq:straight_MLE}
    c^{n,*}\in\arg\max \left\{ p_{Y^n|C^n}(y^n|c^{n,i}): c^{n,i}\in\cC_n\right\}.
\end{align}

For hard detection, the principle underlying the algorithms in
\cite{Duffy18,Duffy19} is to focus on identifying the noise that
was experienced in the channel rather than directly trying to
identify the transmitted code-word.
Based on a statistical model of the symbol-level channel,
the receiver achieves this
by first rank-ordering noise sequences from most likely to least
likely, breaking ties arbitrarily. In that order, the decoder
sequentially queries whether the sequence that remains when the effect of the
putative noise is removed from the received signal is an element
of the codebook. The first instance where the answer is in the
affirmative is the decoded element. To see that 
GRAND corresponds to ML
decoding for channels described in Eq. \eqref{eq:channel}, note that, owing to the definition of $c^{n,*}$ in
Eq. \eqref{eq:straight_MLE},
\begin{align*}
   p_{Y^n|C^n}(y^n|c^{n,*})=
	P(\Noise^n=y^n\ominus c^{n,*})\geq P(\Noise^n=y^n\ominus
	c^{n,i}) \text{ for all } c^{n,i}\in\cC_n.
\end{align*} 
 Irrespective of how the codebook is
constructed, by sequentially subtracting noise sequence effects from the received
sequence in order from the most likely to least likely and querying
if it is in the codebook, the first identified element is a ML
decoding. GRAND can be thought of as a guessing race where
the querying process is halted either with success on identifying
the true noise, and hence the transmitted code-word, or with an
error on identifying a non-transmitted element of the codebook
\cite{Duffy19}. The second algorithm considered in \cite{Duffy19},
GRANDAB (GRAND with ABandonment), follows the same procedure as
GRAND, but abandons noise guessing and declares an error if more
than $|\A|^{n(H+\delta)}$ queries have been made, where $H$ is the
Shannon entropy rate of the noise and $\delta>0$ is arbitrary. If
more than $|\A|^{n(H+\delta)}$ queries are needed to identify a ML
decoding, then the noise has been sufficiently unusual that, in
query number terms, it is beyond the Shannon typical set. As a
result, the block-error rate cost of abandoning is asymptotically
negligible. Note the conditional likelihood that a ML
decoding is in error increases as the number of queries made before
identification of a codebook element increases, so one is abandoning a less certain decoding.


\subsection{SRGRAND}

The adaptation of this noise guessing principle to the symbol reliability
setting results in a ML decoder conditioned on the veracity of that 
symbol reliability information. SRGRAND that proceeds as follows:
\begin{itemize}
\item Given channel output $y^n$ and symbol reliability information
$s^n=(s^n_1,s^n_2,\ldots,s^n_n)$, initialize $i=1$, set the non-noise-impacted symbol
locations of guessed noise sequence $z^n$ to $0$,
and set the masked
(i.e. potentially noise-impacted) entries of $z^n$ to be the most likely noise effect sequence of
length $l^n=\sum_i s^n_i$.
\item While $x^n=y^n\ominus z^n\notin\cC_n$, increase $i$ by $1$
and change the masked potentially noise-impacted symbols 
$z^n$ to be the next most likely noise effect sequence
of length $l^n$.
\item The $x^n$ that results from this while loop is the decoded element.
\end{itemize}
Note that SRGRAND can directly co-opt sequential noise pattern generators 
that were developed for GRAND
by restricting their application to masked symbols alone.

Based on the same logic as for GRAND, which has only hard detection information,
this procedure identifies a conditional ML decoding in this setting, but,
depending on $s^n$, it will have performed fewer queries than GRAND and the
output element is more likely to be the transmitted one, owing to
the targeted nature of the querying.  While SRGRAND always returns
an element of the codebook that is a ML decoding conditioned on the
symbol reliability information, the version with
abandonment, SRGRANDAB, either provides a conditional ML decoding or returns an
erasure.  Without impacting
the capacity-achieving nature of the decoder, several distinct abandonment thresholds, which can be used
in combination, are possible and result in reduced decoding complexity. We comment on two other possibilities in
Section \ref{sec:discussion}, and prove results for one representative
rule:
\begin{itemize}
\item
With $L^n =\sum_{i=1}^n S^n_i$ being the random number of potentially
noise-impacted symbols, assuming it exists, let $\lim_n E(L^n/n) = \muL>0$ be the
long run average proportion of potentially noise-impacted symbols.
SRGRANDAB proceeds as SRGRAND, but abandons and declares an error
without providing an element of the codebook if more than
$|\A|^{n(\muL\HN+\delta)}$ queries are made, where $\HN$ is the
Shannon entropy of the noise for a potentially noise-impacted symbol,
and $\delta>0$ is arbitrary.
\end{itemize} 
This is similar to the GRANDAB abandonment rule, but where enough
queries are made to cover the typical set of the average number of
potentially noise-impacted symbols.

In Section \ref{sec:math} we mathematically determine the gain in capacity, reduction in block error
rate, and decrease in complexity that can be obtained by leveraging
this symbol reliability information within the GRAND
approach. The desirable features of GRAND stem from its focus on
the noise rather than on the codebook as transmissions that are
subject to light noise are quickly decoded, irrespective of the
codebook construction or its rate, and these properties are
preserved as we incorporate the symbol reliability information.
We illustrate the gains to be obtained by considering
a worked mathematical example in Section \ref{sec:SR-BSC} and, in Section \ref{sec:perf}, simulated performance evaluation with Reed-Muller (RM), Bose-Chaudhuri-Hocquenghem (BCH), CA-Polar, and RLC that also treats the possibility of decoding errors due to erroneous masks.

\section{Related Work}
\label{sec:related}

While the vast majority of codes and decoding systems, including
all those currently used in practice, are co-designed, a few universal
decoders have been developed.
The original ML decoder works by computing
the conditional probability of the received signal for every element
of the not-necessarily-linear codebook and selecting the most likely.
This approach means it can be used with structureless codes
stored in a dictionary, and for channels with memory
so long as the decoder has an accurate statistical description of
it. This brute force evaluation requires an enormous number of
real-valued computations for every received code-word, rendering
the approach infeasible for all but the shortest of codes
\cite{Alessio20}. It is, however, amenable to mathematical analysis
and remains of theoretical importance in the provision of performance
bounds for an optimal decoder.

Restricting to binary linear $[n,k]$ codes, universal decoders have
been studied for both cryptographical and communications purposes.
Finding its roots in Prange's seminal research \cite{prange1962use},
Information Set Decoding (ISD) and its variants
\cite{stern1988method,lee1988observation,leon1988probabilistic,peters2010information,bernstein2011smaller,becker2012decoding}
are randomized algorithms used to assess mathematically 
the security provided by code-based cryptosystems as the code becomes
long. The core cryptographic scenario essentially maps to memoryless
hard detection channels. Given a binary linear code-word
and a received hard detection communication that has been subject
to a known number of flipped bits, for each demodulated binary output
the basic version of ISD works in two iterated steps until a decoding
is found. The first is a transformation where the columns of the
binary parity check matrix are randomly permuted and Gaussian
elimination is performed to rewrite the code in systematic format.
In the second step, for a number of columns that is less than the
code's correction capability, the difference between the syndrome
and all linear combinations of that number of columns is evaluated.
Once this difference is found to be the zero vector, the Gaussian
elimination transformation is inverted to identify the decoded
code-word. Probabilistic analysis of the algorithm provides worse
case bounds for decoding any code, and later tweaks to the algorithm
serve to reduce the exponent in the complexity as a function of
code-length. To use ISD for communications requires some adaptation
owing to the assumption of an {\it a priori} known number of flipped
bits. 

In communications, soft information has been exploited to produce
approximate ML decoders for binary linear codes.  In 1974
\cite{dorsch1974decoding} Dorsch introduced the idea of the Most
Reliable Basis (MRB), and developments on that theme have led to
Ordered Statistics Decoding (OSD) \cite{FL95} and its variants
\cite{gazelle1997reliability,VF04,wu2006soft,wu2007soft,baldi2016use,guo2019some,choi2019fast}.
The principle underlying OSD is to approximate ML decoding
by computing conditional probabilities of the received signal for a
substantially smaller list than the whole codebook, which one hopes contains the ML decoding.
The number of real-valued conditional probabilities that must then
be computed per received signal is determined by the size of the
list. As with ISD, in OSD the linear code's column order is re-arranged
and Gaussian elimination used to systematize it, but rather than
using repeated random permutations the columns are ordered once
in terms of decreasing bit reliability of the received transmission
as determined from the soft information. The most reliable $k$ bits
are hard demodulated and a list of all binary sequences within a
fixed Hamming distance, $t$, of that sequence is created. Each of
those $\sum_{i=0}^t {k \choose i}$ sequences are multiplied by the
revised code generator to create putative code-words in the MRB.
The conditional probabilities of the received sequences for these
code-words, rather than all code-words, is then computed. The most
likely one is identified and converted back to the original basis
as the decoding. OSD's approximation relies on the principle that
if one takes a hard decision on the $k$ most reliable channel
observations, depending on channel conditions, only few errors are
expected within them, with the majority of the errors
introduced by the channel instead being contained within the least
reliable channel outputs, which are essentially ignored for decoding
purposes. The larger the list, the better the approximation to ML
decoding. 

The original hard detection GRAND algorithm \cite{Duffy18,Duffy19,An20} is a true ML block-code decoder 
for hard detection channels subject to noise with or without memory. GRAND's operation requires a 
method to query a string's membership of a codebook.
If the code is unstructured and stored in a dictionary, each query
corresponds to a tree-search with a complexity that is logarithmic
in the code-length. If the code was a Cyclic Redundancy Check (CRC)
code, which is traditionally only used for error detection, checking for
codebook membership requires only a simple polynomial calculation.
If the code is linear in any field, codebook membership can be
determined by a single matrix multiplication and comparison. The
matrix multiplication results in the evaluation of a syndrome, but
GRAND is not a syndrome decoder. No syndrome table is kept and, if
channel conditions change, GRAND naturally adapts its decoding
without recomputing an entire syndrome table. This latter point is
particularly significant in the presence of soft information, which
effectively serves to provide a distinct channel for each communication.


\section{Mathematical Analysis}
\label{sec:math}
As in \cite{Duffy19}, for the analysis of SRGRAND and SRGRANDAB we
exploit the fact that the algorithm is a race between sequential
queries either identifying the noise in the channel, which results
in a correct decoding, or encountering a non-transmitted
element of the codebook, which results in an error. The difference
with SRGRAND is that the decoder is faster and more precise than GRAND 
because it only asks questions of the sub-string that has been potentially
impacted by noise. While the analysis is more involved,
the results obtained are, possibly surprisingly, as clean as in the
hard detection setting. Our mathematical treatment relies on techniques 
from Large Deviation Theory. While we endeavour to provide guiding heuristics, 
to follow the arguments in
detail requires familiarity with that theory \cite{Dembo98,varadhan2008,weiss2019large}.

To analyze the algorithm, we recall notions of guesswork
\cite{Massey94,Arikan96}. Given the receiver is told that
$n$ symbols have been potentially impacted by noise, it creates
a list of noise sequences,
$\GN:\A^n\mapsto\{1,\ldots,|\A|^n\}$, ordered from most likely to least
likely, with ties broken arbitrarily:
$\GN(z^{n,i})\leq \GN(z^{n,j}) \text{ iff } P(\Noise^n=z^{n,i})\geq P(\Noise^n=z^{n,j})$.
For a sequence, $z^n\in\A^n$, its guesswork is the integer $\GN(z^n)$.
For example, if the channel were binary, $\A=\{0,1\}$, and noise
was Bernoulli for some $p<0.5$, then the guesswork order follows Hamming weight. 
For independent and identically distributed
noise on more general alphabets, the family of measures that share the same guesswork order
are described by an exponential family \cite{Beirami19}.

\begin{assumption}[Noise distribution]
\label{ass:N}
When noise occurs, it is independent and identically distributed
as $\UNoise$ where $P(\UNoise=i)=p_{N|S}(i|1)=P(N=i|S=1)$ for $i\in\A$.
\end{assumption}

Under Assumption \ref{ass:N}, if one must guess the entire noise
string of length $n$, Arikan \cite{Arikan96} first established how
the non-negative moments of guesswork, $E(G(N^n)^\alpha)$ for $\alpha>0$, 
scale in $n$ in terms of R\'enyi entropies of order $\alpha$. Building
on those and subsequent results that treated negative moments,
\cite{Pfister04} for $\alpha>-1$ and  for
$\alpha\leq-1$, and more general noise sources, it was established \cite{Christiansen13}
that the logarithm of guesswork satisfies a Large Deviation Principle (LDP) \cite{Dembo98}. The LDP provides estimates on the
distribution of the number of queries required to correctly identify
a noise-string and was used as the basis to analyze one side of the
decoding race in the hard detection setting \cite{Duffy18,Duffy19}. Recall that all logarithms are base $|\A|$.

\begin{proposition}[Guesswork Moments and Large Deviation Principle \cite{Arikan96, Pfister04, Christiansen13}]
Under assumption \ref{ass:N}, if $S^n=1^n$ so that all received
symbols are potentially impacted by noise,  and are distributed as
$\UNoise$, the scaled Cumulant Generating Function (sCGF) of $\{n^{-1}\log
G(N^n)\}$ exists:
\begin{align}
\LambdaN(\alpha)
=\lim_{n\to\infty} \frac 1n \log E(G(N^n)^\alpha | S^n=1^n) 
=\lim_{n\to\infty} \frac 1n \log E(G(\UNoise^n)^\alpha) 
	&= 
	\begin{cases}
	\displaystyle
	\alpha H_{1/(1+\alpha)} & \text{ if } \alpha>-1\\
	-\Hmin & \text{ if } \alpha\leq-1,\\
	\end{cases}
\label{eq:sCGF}
\end{align}
where $H_\alpha$ is the R\'enyi entropy of a single noise element, $\UNoise$, with parameter $\alpha$
\begin{align*}
H_\alpha &= \frac{1}{1-\alpha}\log\left(\sum_{i\in\A} p_{N|S}(i|1)^\alpha\right),
\text{ }
H_1 = \HN = -\sum_{i\in\A} p_{N|S}(i|1) \log p_{N|S}(i|1), 
\text{ and }
\Hmin = -\max_{i\in\A}\log p_{N|S}(i|1).
\end{align*}
Moreover, given $S^n=1^n$, the process $\{n^{-1}\log G(N^n)\}$
satisfies a LDP (e.g. \cite{Dembo98}) with convex rate-function
\begin{align}
\label{eq:rf}
\IN(x) = \sup_{\alpha\in\R}(x\alpha-\LambdaN(\alpha)), \text{where }\IN(0)=\Hmin
\text{ and }\IN(\HN)=0.
\end{align}
\end{proposition}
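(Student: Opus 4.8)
The plan is to proceed in three stages: first establish the closed form for the sCGF $\LambdaN$, then invoke the G\"artner--Ellis theorem to upgrade the convergence of the sCGF into the full LDP, and finally read off the two anchor values $\IN(0)=\Hmin$ and $\IN(\HN)=0$ directly from the Legendre--Fenchel transform.

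For the sCGF, since under Assumption \ref{ass:N} the noise is i.i.d.\ as $\UNoise$ given $S^n=1^n$, I would organize the computation of $E(\GN(\UNoise^n)^\alpha)$ by the method of types. All sequences of a common type are equiprobable, so the guesswork $\GN(z^n)$ is, up to polynomial factors, the number of sequences whose type is at least as likely as that of $z^n$; a saddle-point evaluation over the simplex of types then yields the R\'enyi entropy $H_{1/(1+\alpha)}$ for $\alpha>0$. This is Arikan's calculation \cite{Arikan96}, and I would reproduce its two matching bounds (Arikan's inequality for the upper estimate, a restriction to a single dominating type for the lower estimate). For $-1<\alpha\le 0$ the same type-counting applies, now with the moment emphasizing sequences of small guesswork, following \cite{Pfister04}. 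The regime $\alpha\le-1$ is different in character: negative moments of this order are dominated by the single most likely sequence, whose probability is $|\A|^{-n\Hmin}$ and whose guesswork is $1$, so $E(\GN(\UNoise^n)^\alpha)$ is controlled by $|\A|^{-n\Hmin}$ and the limit saturates at $-\Hmin$, as in \cite{Christiansen13}. I expect this negative-moment regime, and in particular the behaviour across the junction $\alpha=-1$ where the two cited analyses meet, to be the main technical obstacle.

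With the sCGF in hand as a finite convex function of $\alpha\in\R$, I would apply the G\"artner--Ellis theorem \cite{Dembo98}. The branch $\alpha\mapsto\alpha H_{1/(1+\alpha)}$ is differentiable on $(-1,\infty)$ and the function is constant, equal to $-\Hmin$, on $(-\infty,-1]$, so the only point requiring care is $\alpha=-1$. I would verify essential smoothness there by computing the right derivative as $\alpha\to-1^+$, comparing it with the left derivative $0$, and checking steepness at the boundary of the effective domain, so that the Legendre--Fenchel transform $\IN(x)=\sup_{\alpha\in\R}(x\alpha-\LambdaN(\alpha))$ is indeed the LDP rate function; convexity of $\IN$ is then automatic as a supremum of affine functions.

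Finally I would identify the two anchor values. Since $\LambdaN(0)=0$ and $\LambdaN$ is differentiable at $0$ with $(\LambdaN)'(0)=\HN$, the supremum defining $\IN(\HN)$ is attained at $\alpha=0$, giving $\IN(\HN)=0$. For $\IN(0)=\sup_{\alpha}(-\LambdaN(\alpha))$, the branch $\alpha\le-1$ contributes exactly $\Hmin$; since $\LambdaN$ is convex, $-\LambdaN$ is concave, and being constant equal to $\Hmin$ on $(-\infty,-1]$ it cannot exceed $\Hmin$ on $(-1,0]$, so $\IN(0)=\Hmin$. Continuity across the junction is consistent with $H_{1/(1+\alpha)}\to H_\infty=\Hmin$ and $-\alpha\to1$ as $\alpha\to-1^+$.
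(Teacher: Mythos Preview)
The paper does not prove this proposition; it is quoted as a known result with citations to \cite{Arikan96,Pfister04,Christiansen13}, so there is no ``paper's own proof'' to compare against. Your sketch is a reasonable reconstruction of how those references combine, and the anchor-value computations are correct (for $\IN(0)$ you should also note that $-\LambdaN(\alpha)\le 0\le\Hmin$ for $\alpha>0$, which is trivial but needed since the supremum is over all of $\R$).

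One genuine caution: your plan to obtain the LDP via G\"artner--Ellis hinges on essential smoothness of $\LambdaN$, and the junction at $\alpha=-1$ is more delicate than you indicate. Writing $\LambdaN(\alpha)=(1+\alpha)\log\sum_i p_{N|S}(i|1)^{1/(1+\alpha)}$ for $\alpha>-1$ and expanding as $\alpha\to-1^+$, the right derivative of $\LambdaN$ at $-1$ equals $\log m$, where $m$ is the number of symbols attaining $\max_i p_{N|S}(i|1)$, while the left derivative is $0$. Thus $\LambdaN$ is differentiable at $-1$ only when the most likely symbol is unique. When $m\ge 2$, G\"artner--Ellis in its standard form does not apply and you would need either the non-smooth extension or, as in \cite{Christiansen13}, a direct proof of the LDP that bounds $P(\GN(\UNoise^n)\approx |\A|^{nx})$ by relating guesswork rank to the empirical log-probability of the noise string. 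You have correctly flagged this junction as the main obstacle, but your proposal should acknowledge that G\"artner--Ellis may fail there rather than assume smoothness can always be verified.
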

Heuristically, Eq. \eqref{eq:rf} implies that $P(G(N^n)\approx |\A|^{nx}) = |\A|^{-n\IN(x)}$ 
for large $n$. As $\IN(\HN)=0$, with high probability the number of queries until 
$N^n$ is identified concentrates at $|\A|^{n\HN}$. The probability that $N^n$ is
identified in either fewer or more queries decays exponentially in $n$ with a 
rate governed by the convex function $\IN$ defined in Eq. \eqref{eq:rf}. 
Setting $\alpha=1$ in Eq. \eqref{eq:sCGF}, as Arikan originally
did in his investigation of sequential decoding, establishes that
the expected guesswork grows exponentially in $n$ with rate
$\Hhalf$.
which is greater than the Shannon entropy, $\HN$. That the zero of the
rate-function in Eq. \eqref{eq:rf} occurs at $\HN$ ensures, however, that the majority of the probability
is accumulated by making queries up to and including the Shannon typical
set. The apparent discrepancy in these two facts occurs because the
guesswork distribution has a long tail that dominates its average but has little probability.

In the symbol reliability setting,  
it is not necessary to guess a noise-string of length $n$.
Instead, one must guess a random number of symbols corresponding
to those inside the mask that are potentially noise-impacted. To that end, we have
the following assumption on the size of the mask, which is the number
of potentially noise-impacted symbols per transmission. 
\begin{assumption}[Number of potentially noise-impacted symbols - size of mask]
\label{ass:L}
With $L^n=\sum_{i=1}^nS^n_i$ being the mask size, i.e. the
number of potentially noise-impacted symbols in a block of length $n$,
the proportion of them, 
$\{L^n/n\}$, satisfies a LDP with a strictly
convex rate-function $\IL:\R\mapsto[0,\infty]$ such that $\IL(l)=\infty$
if $l\notin[0,1]$ and $\IL(\muL)=0$, where $\lim_n E(L^n/n) =
\muL>0$. Define the sCGF for $\alpha\in\R$
to be
$
\LambdaL(\alpha) = 
\lim_{n\to\infty} \frac 1n \log E\left(|\A|^{\alpha L^n}\right) = 
\sup_{l\in[0,1]}\left(\alpha x - \IL(l)\right)$,
which exists in the extended reals owing to Varadhan's Lemma (e.g. \cite{Dembo98}[Theorem 4.3.1]).
\end{assumption}
Roughly, Assumption \ref{ass:L}, which is true for a broad class of
processes $\{S^n\}$ including i.i.d., Markov and general mixing, e.g.
\cite{Dembo98}, says  the probability of having $nl$  potentially noise impacted  symbols   decays exponentially
in $n$ with a rate, $\IL(l)$, that is positive unless $l$ is the mean
$\muL$, i.e. $P(L^n\approx nl) \approx |\A|^{-n\IL(l)}$. 

With some abuse of notation for Shannon entropy, under Assumptions
\ref{ass:N} and \ref{ass:L}, recalling that we define all logarithms as base $| \A |$, the symbol reliability decoding channel's capacity, $\CSOFT$
is upper bounded by
\begin{align}
\CSOFT &\leq \limsup_{n\to\infty} \frac 1n \sup I(X^n;(Y^n,S^n)) 
	\leq 1- \limsup_{n\to\infty} \frac1n H\left(\UNoise^{L^n}\right)  
	= 1- \limsup_{n\to\infty} \frac{E(L^n)}{n} H(\UNoise)\nonumber\\
	&= 1-\muL h(p_{N|S}(\cdot|1)), \label{eq:CSOFT}
\end{align}
where $h\left(p_{N|S}(\cdot|1)\right)=-\sum_{i\in\A}p_{N|S}(i|1)\logA p_{N|S}(i|1)$
is the Shannon entropy of $p_{N|S}(\cdot|1)$,
we have upper-bounded the entropy of the input by $1$,
and used the fact that the channel is invertible (i.e.
Eq. \eqref{eq:channel}). Through  constructing
SRGRAND and SRGRANDAB, we will show $\CSOFT$ is attainable.

Under Assumptions \ref{ass:N} and \ref{ass:L}, in a distinct context
and for a distinct purpose, it was established in \cite{Christiansen13b}
that with a random number of characters to be guessed one
has the following LDP.

\begin{proposition}[LDP for guessing subordinated noise \cite{Christiansen13b}]
\label{theorem:LDPN}
Under assumptions \ref{ass:N} and \ref{ass:L},
the joint subordinated guesswork and length process $\{(1/n \log G(\UNoise^{L^n}), L^n/n)\}$ 
satisfies a LDP with the jointly convex rate-function
\begin{align}
\label{eq:INLjoint}
\INLjoint(g,l) = l\IN\left(\frac{g}{l}\right)+\IL(l),
\end{align}
where $\IN$ is the guesswork rate-function defined in Eq. \eqref{eq:rf} and
$\IL$ is the length rate-function defined in Assumption \ref{ass:L}. Note that
$\INLjoint(H,\muL)=0$, where $H$ is Shannon entropy of a noise-impacted symbol and
$\muL$ is the average number of potentially noise-impacted symbols.

The subordinated guesswork process $\{1/n \log
G(\UNoise^{L^n})\}$ 
satisfies a LDP with the convex rate
function
\begin{align}
\label{eq:INL}
\INL(g) = \inf_{l\in[0,1]}
	\left(l\IN\left(\frac{g}{l}\right)+\IL(l)\right),\text{ where }\INL(\muL\HN)=\INLjoint(H,\muL)=0.
\end{align}

The sCGF for $\{1/n \log G(\UNoise^{L^n})\}$, the Legendre-Fenchel
transform of $\INL$, is given by the composition of the
sCGF for the length with the sCGF for the guesswork of non-subordinated noise
\begin{align}
\label{eq:LambdaNL}
\LambdaNL(\alpha) 
&
=\lim_{n\to\infty} \frac 1n \log E\left(G\left(\UNoise^{L^n}\right)^\alpha\right)
	= \LambdaL(\LambdaN(\alpha))
	= \sup_g\left(g\alpha-\INL(g)\right) \text{ for } \alpha\in\R.
\end{align}
In particular, the average number of queries required to identify subordinated noise is
given by
\begin{align}
\label{eq:lambdaNL1}
\LambdaNL(1)  = \lim_{n\to\infty} \frac 1n \log E\left(G\left(\UNoise^{L^n}\right)\right)
 = \LambdaL(\Hhalf).
\end{align}
\end{proposition}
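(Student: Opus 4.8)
The plan is to assemble the claimed joint LDP out of the two input large deviation principles already in hand---the guesswork LDP of Proposition 1 (with rate $\IN$) and the length LDP of Assumption \ref{ass:L} (with rate $\IL$)---by exploiting the fact that, under Assumption \ref{ass:N}, the noise is independent of the channel-state process $\{S^n\}$ that governs $L^n$. Conditioning on the event $\{L^n=m\}$, the guessed string is $\UNoise^m$ whose law depends on $m$ only through its length, so the conditional guesswork law coincides with the unconditional law of $\GN(\UNoise^m)$. This conditional independence is what lets the two rate functions simply add.

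First I would establish a conditional LDP: given $L^n/n\approx l$ with $l>0$, write $n^{-1}\log\GN(\UNoise^{L^n}) = l\cdot\big((nl)^{-1}\log\GN(\UNoise^{nl})\big)$; since the bracket obeys the scale-$nl$ guesswork LDP of Proposition 1 with rate $\IN$, rescaling gives the conditional estimate $P(n^{-1}\log\GN(\UNoise^{nl})\approx g\mid L^n=nl)\approx|\A|^{-nl\,\IN(g/l)}$, i.e.\ conditional rate $l\,\IN(g/l)$ at speed $n$. Because the noise is independent of $\{S^n\}$, the joint probability factorizes into this conditional estimate times $P(L^n\approx nl)\approx|\A|^{-n\IL(l)}$, so the joint rate is the sum $\INLjoint(g,l)=l\IN(g/l)+\IL(l)$. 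Its unique zero sits at $(g,l)=(\muL\HN,\muL)$, where simultaneously $\IL(\muL)=0$ and $\IN(g/l)=\IN(\HN)=0$; this is the point the paper labels $\INLjoint(H,\muL)=0$.

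With the joint principle in hand, the marginal LDP for $\{n^{-1}\log\GN(\UNoise^{L^n})\}$ follows from the contraction principle applied to $(g,l)\mapsto g$, yielding $\INL(g)=\inf_{l\in[0,1]}(l\IN(g/l)+\IL(l))$ and in particular $\INL(\muL\HN)=0$. The sCGF is the Legendre--Fenchel transform $\LambdaNL(\alpha)=\sup_g(g\alpha-\INL(g))$ by Varadhan's Lemma. Substituting $\INL=\inf_l\INLjoint$, rewriting $g\alpha-\inf_l(\cdot)=\sup_l(g\alpha-\cdot)$, interchanging the two suprema, and changing variables $u=g/l$ in the inner supremum collapses it to $l\sup_u(u\alpha-\IN(u))=l\,\LambdaN(\alpha)$, whence the outer supremum over $l$ is $\LambdaL(\LambdaN(\alpha))$, the claimed composition. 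Reading $\LambdaN(1)=\Hhalf$ off equation \eqref{eq:sCGF} at $\alpha=1$ then gives $\LambdaNL(1)=\LambdaL(\Hhalf)$.

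The main obstacle is the subordination itself: the guesswork LDP of Proposition 1 is stated at a single deterministic scale, whereas here it must be applied at the random, $n$-dependent scale $m=L^n$ and glued to the length LDP uniformly in $l$. Making the heuristic ``$\approx$'' rigorous requires a uniform (over the admissible range of $l$) large deviation estimate for guesswork, together with exponential tightness to control the boundary behaviour as $l\to 0$ and the heavy guesswork tail; this, rather than any of the convex-analytic manipulations, is where the real work lies, and is precisely the content already established in \cite{Christiansen13b}. I would deliberately route the marginal LDP through the contraction principle rather than Gärtner--Ellis, since $\LambdaNL$ inherits the kink of $\LambdaN$ at $\alpha=-1$ and so need not be differentiable there.
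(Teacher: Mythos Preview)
The paper does not give its own proof of this proposition: it is stated with attribution to \cite{Christiansen13b} and invoked as a black box, with only the informal interpretation following the statement. So there is no in-paper argument to compare against.

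That said, your sketch is sound and is essentially the route one would expect in \cite{Christiansen13b}: condition on $L^n$ to exploit the independence of the noise from $\{S^n\}$, obtain the conditional guesswork rate $l\,\IN(g/l)$ by rescaling the Proposition~1 LDP from speed $nl$ to speed $n$, add $\IL(l)$ for the joint rate, contract to get $\INL$, and recover $\LambdaNL=\LambdaL\circ\LambdaN$ by interchanging suprema in the Legendre--Fenchel transform with the substitution $u=g/l$. Your identification of the genuine technical burden---uniform-in-$l$ control of the guesswork LDP at the random scale $L^n$, plus exponential tightness near $l=0$---is exactly right, and your caution about Gärtner--Ellis (the kink in $\LambdaN$ at $\alpha=-1$) is well placed. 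Nothing is missing from the outline; the paper simply offloads all of this to the cited reference.
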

Roughly speaking, the joint LDP indicates that, for large $n$,
$
P\left(\left(\frac1n \log G\left(\UNoise^{L^n}\right), \frac{L^n}{n}\right)\approx(g,l)\right) 
\approx |\A|^{-n \INLjoint(g,l)},$
and $\INLjoint(g,l)$ in Eq. \eqref{eq:INLjoint} can be interpreted
as follows: if the number of potentially noise-impacted symbols is
$L^n\approx nl$, which is exponentially unlikely with rate $\IL(l)$,
then having  the logarithm of the subordinated guesswork  be $\log
G\left(\UNoise^{L^n}\right)\approx ng$ has essentially the same likelihood
as $\log G\left(\UNoise^{ln}\right)\approx ng$, which has rate $l\IN(g/l)$
as a total deviation of $g$ must be accrued over a smaller proportion
of potentially noise-impacted symbols. The unconditioned LDP follows
from the large deviations mantra that rare events occur in the most
likely way, so that the rate-function $\INL$ is determined from the
proportion of potentially noise-impacted symbols that gives the
smallest decay rate for the probability.

Results on the subordinated guesswork process $\{1/n \log
G(\UNoise^{L^n})\}$ governed by the rate-function in Eq.
\eqref{eq:INL} are sufficient to enable us to prove a Channel
Coding Theorem for the symbol reliability channel. Finer-grained results
on error exponents that depend on the proportion of symbols that
were noise-impacted, however, follow from the LDP for the joint subordinated guesswork and length process
$\{(1/n \log G(\UNoise^{L^n}), L^n/n)\}$ governed by the rate-function
given in Eq. \eqref{eq:INLjoint}.

We note that $\LambdaL$ is a convex function whose derivative at the
origin is $\muL$, the mean number of potentially noise-impacted
symbols, so that $\LambdaL(\Hhalf)\geq \muL \Hhalf$. Hence, from
Eq. \eqref{eq:lambdaNL1}, the average number of queries until
the true channel-noise is identified grows exponentially in $n$ at a
potentially larger rate than the guesswork required for the average
proportion of potentially noise-impacted symbols. Despite that, the
zero of the rate-function in Eq. \eqref{eq:INL} occurs at $\muL
H$, so that the majority of the likelihood of identifying the true
subordinated noise occurs by the Shannon entropy of the typical set
of average number of potentially noise-impacted symbols. Thus, while
stochastic fluctuations in the number of potentially noise-impacted
symbols has relevance to complexity and error exponents, that
variability has no impact on capacity. In a manner akin to GRANDAB, without
loss of capacity, 
complexity can be ameliorated
by abandoning guessing after a suitable number of queries.

To mathematically characterize the number of queries made until
a non-transmitted code-word is identified, which is
the second part of the guesswork decoding race, we assume that the
codebook is created uniformly at random.
For uniformly distributed codebooks, the location of each 
element in the guessing order of a received transmission is itself
uniform in $\{1,\ldots,|\A|^n\}$. The distribution of
the number of guesses until any non-transmitted element of the
codebook is hit upon is thus distributed as the minimum of $M_n$ 
uniform random variables. We can, therefore, 
use the
following result from \cite{Duffy18, Duffy19}, again recalling that our logarithm is base $|\A|$.

\begin{proposition}[LDP for Guessing a Non-transmitted Code-word \cite{Duffy18, Duffy19}]
\label{theorem:LDPU}
Assume that $M_n=\lfloor |\A|^{nR}\rfloor$ for some $R>0$,
and that $U^{n,1},\ldots,U^{n,M_n}$ are independent random
variables, each uniformly distributed in $\{1,\ldots,|\A|^n\}$. 
Defining $U^n =\min_i U^{n,i}$, $\{1/n \log U^{n}\}$ then $
\lim_{n\to\infty} \frac 1n \log \E(U^n) = 1-R.
$ and  $U^n$ satisfies
a LDP with the lower semi-continuous rate-function
\begin{align}
    \IU(u) = \begin{cases}
    1-R-u & \text{ if } u\in[0,1-R]\\
    +\infty & \text{ otherwise.}
    \end{cases}
\label{eq:IU}
\end{align}

\end{proposition}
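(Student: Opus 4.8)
The plan is to exploit the exact distribution of the minimum of i.i.d.\ uniform random variables and read the large-deviation tails off directly, rather than appealing to the G\"artner--Ellis theorem (whose scaled cumulant generating function here is piecewise-linear with a kink, which would complicate the lower bound). Throughout take $R\in(0,1)$, the regime of interest. Write $q=|\A|^n$ for the number of atoms and $m=M_n=\lfloor|\A|^{nR}\rfloor$ for the number of samples, and set $Z_n=\tfrac1n\log U^n$, noting $Z_n\in[0,1]$ since $U^n\in\{1,\dots,q\}$. Because each $U^{n,i}$ is uniform on $\{1,\dots,q\}$, we have $P(U^{n,i}\ge k)=(q-k+1)/q$, so by independence
\begin{align*}
P(U^n\ge k)=\left(\frac{q-k+1}{q}\right)^{m}, \qquad P(U^n\le k)=1-\left(1-\frac{k}{q}\right)^{m}.
\end{align*}

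First I would extract the lower-tail rate. Fix $u\in(0,1-R)$ and put $k=\lfloor|\A|^{nu}\rfloor$, so that $k/q\to0$ while $m\,k/q\approx|\A|^{n(R-1+u)}=|\A|^{-n(1-R-u)}\to0$. Since $1-(1-x)^m\sim mx$ whenever $x\to0$ and $mx\to0$, this gives $P(U^n\le|\A|^{nu})=|\A|^{-n(1-R-u)(1+o(1))}$, hence
\begin{align*}
\lim_{n\to\infty}\frac1n\log P(Z_n\le u)=-(1-R-u),\qquad u\in(0,1-R).
\end{align*}
On the other hand, for $u>1-R$ one has $m\,k/q\to\infty$, so $P(U^n\ge|\A|^{nu})\approx\exp(-|\A|^{n(R-1+u)})$ decays super-exponentially; thus excursions above $1-R$ carry rate $+\infty$, matching $\IU(u)=+\infty$ there, while $Z_n\to1-R$ in probability (the unique zero of $\IU$).

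With these tails in hand, the LDP follows from the fact that $\IU$ is continuous and strictly decreasing on $[0,1-R]$, vanishing at its right endpoint and equal to $+\infty$ beyond. For the upper bound on a closed set $F$, monotonicity gives $\inf_F\IU=\IU\big(\sup(F\cap[0,1-R])\big)$, and $P(Z_n\in F)$ is controlled by $P\big(Z_n\le\sup(F\cap[0,1-R])\big)$ together with the super-exponentially small mass above $1-R$. For the lower bound on an open set $G$, I would pick $u^*\in G\cap[0,1-R]$ approaching $\inf_G\IU$ and bound $P\big(Z_n\in(u^*-\epsilon,u^*+\epsilon)\big)$ below by the CDF difference $P(Z_n\le u^*+\epsilon)-P(Z_n\le u^*-\epsilon)$, whose dominant term is $|\A|^{-n(1-R-u^*-\epsilon)}$; letting $\epsilon\downarrow0$ yields the matching rate. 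Lower semicontinuity at $u=1-R$ is automatic from the formula for $\IU$.

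Finally, for the mean I would sum the tail, $E(U^n)=\sum_{k=1}^{q}P(U^n\ge k)=q^{-m}\sum_{j=1}^{q}j^{m}$, and sandwich $\int_0^{q}x^m\,dx\le\sum_{j=1}^{q}j^m\le\int_1^{q+1}x^m\,dx$. This gives $E(U^n)=\tfrac{q}{m+1}(1+o(1))$ to exponential order, since $(q+1)^{m+1}/q^{m}=q(1+1/q)^{m+1}\le q\exp((m+1)/q)$ with $(m+1)/q\approx|\A|^{-n(1-R)}\to0$ for $R<1$; hence $\tfrac1n\log E(U^n)\to1-R$. The main obstacle is the non-standard shape of the rate function --- affine on $[0,1-R]$ with a jump to $+\infty$ --- so I would emphasise the direct CDF-tail route over G\"artner--Ellis, taking particular care with the super-exponential upper tail (which produces the $+\infty$ branch) and with matching the local interval probabilities to the affine rate in the lower-bound step.
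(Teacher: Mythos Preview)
Your argument is correct. The paper itself does not prove this proposition; it is imported verbatim from \cite{Duffy18,Duffy19} and stated without proof, so there is no in-paper argument to compare against. Your direct tail computation from the exact distribution of $\min_i U^{n,i}$ is the natural route, and the asymptotics you use are sound: for $u<1-R$ the approximation $1-(1-k/q)^m\sim mk/q$ is justified by $mk/q\to 0$, giving the affine rate $1-R-u$; for $u>1-R$ the bound $(1-(k-1)/q)^m\le \exp(-m(k-1)/q)$ with $m(k-1)/q\to\infty$ gives super-exponential decay, whence the $+\infty$ branch. The tail-sum computation of $E(U^n)$ via integral comparison is clean and yields the stated limit $1-R$.

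Two small points worth tightening when you write it up. First, in the LDP upper bound, make explicit that if $F$ is closed and $1-R\notin F$ then closedness forces $\inf\big(F\cap(1-R,1]\big)>1-R$, so the super-exponential estimate applies uniformly over $F\cap(1-R,1]$; without this remark the ``mass above $1-R$'' step is a bit loose. Second, in the lower bound, handle the boundary case $1-R\in G$ separately: there $\inf_G\IU=0$, and you recover the bound directly from $Z_n\to 1-R$ in probability (which follows from your two tail estimates), rather than from the CDF-difference argument. Your restriction to $R\in(0,1)$ is the relevant regime and is harmless.
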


\begin{figure*}
\begin{center}
\vspace{-0.2in}
\includegraphics[width=0.5\textwidth]{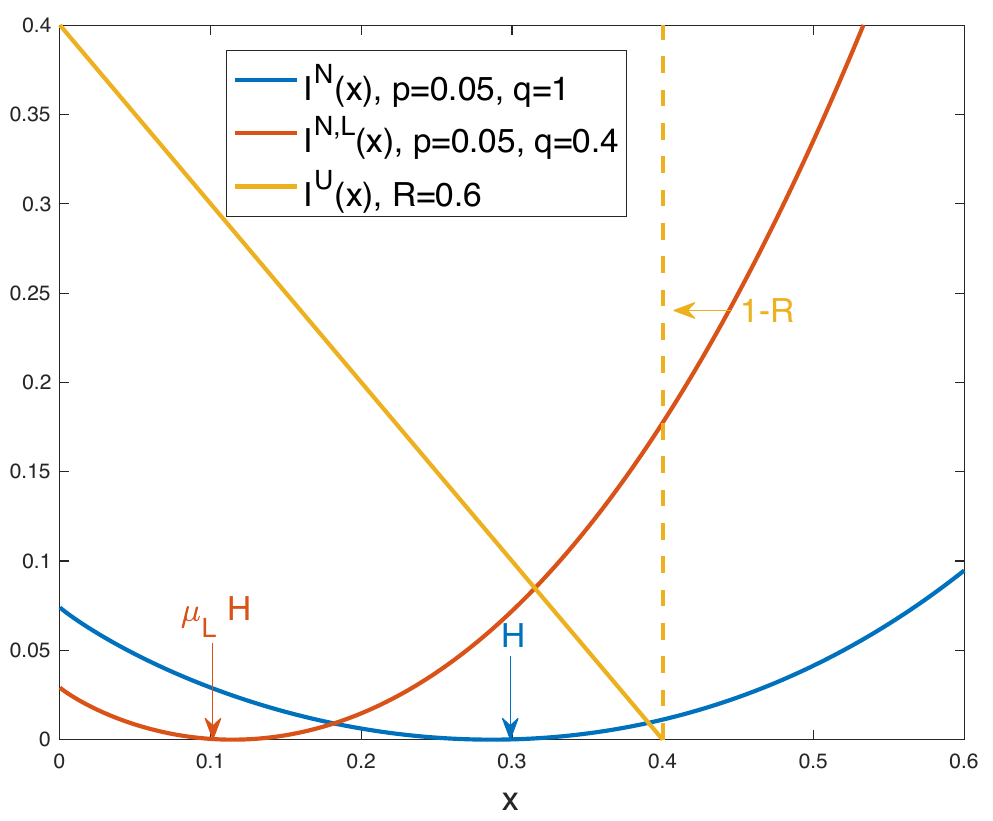}
\vspace{-0.2in}
\end{center}
\caption{Probabilistic guesswork decoding race in a SR-BSC. With $p=0.05$ and
codebook rate $R=0.6$, the large deviations rate
function for: incorrectly identifying a non-transmitted
element of the codebook, $\IU(x)$; guessing the true noise if 
$q=1$ and all bits are potentially noise-impacted, $\IN(x)$; 
with $q=0.4$, guessing the true noise if a random set of locations are potentially noise-impacted, $\INL(x)$.
With $x$ being the value
on the x-axis, when $2^{nx}$ noise guesses are made the 
likelihood of success for each of these three racing elements is approximately
$2^{-n\inf_{y<x}I(y)}$ for the relevant rate function, $I(y)$.
}
\vspace{-0.2in}
\label{fig:1}
\end{figure*}

A graphical representation of the rate-functions that determine the
asymptotic likelihoods of outcomes of this guessing race can be
found in Fig.~\ref{fig:1}. When all symbols are subject to noise,
as in \cite{Duffy18,Duffy19}, the channel is within capacity so
long as the zero of the rate-function for guessing noise, which
occurs at the Shannon entropy rate of the noise $\HN$, is smaller
than the zero of the rate-function for identifying a non-transmitted
code-word, which occurs at $1-R$, where $R$ is the normalized
codebook rate.  As in all likelihood the correct decoding is
identified after fewer queries than an incorrect element of the
codebook would be identified, the algorithm experiences concentration
onto a correct decoding, which leads to the proof of the classical
hard detection Channel Coding Theorem, $R<1-\HN$, in
\cite{Duffy18,Duffy19}. In the present paper, the zero of the rate
function for the subordinated noise-guessing occurs at $\muL\HN$,
the average number of potentially noise-impacted symbols times the
Shannon entropy of the noise. So long as $\muL\HN$ is smaller than
$1-R$, noise-guessing concentrates on identifying correct decodings
before erroneous ones, leading to the Symbol Reliability
Channel Coding Theorem, proved below, where any $R<1-\muL\HN$ is
achievable.

The proportion of potentially noise-impacted symbols is available
to the receiver and so it is reasonable to consider error exponents
subject to its knowledge.  We characterize these error exponents in terms of 
$R$ and the rate-function $\INLjoint$ given in Eq.
\eqref{eq:INLjoint}.In particular, define
\begin{align}
\label{eq:errorRL}
\errorL(R,l) = -\lim_{\delta\downarrow0}\lim_{n\to\infty}
	\frac1n \log P\left( \frac{L^n}{n}\in(l-\delta,l+\delta),
	U^n\leq\GN\left(\UNoise^{L^n}\right)\right)
\end{align}
to be the probability exponent that the proportion of potentially
noise-impacted symbols, representing the size of the mask, is $l$, and that there is an error, as the
number of queries required to identify a non-transmitted code-word
is smaller than the number of queries required to identify the true
noise.

\begin{theorem}[Symbol Reliability Channel Coding Theorem]
\label{prop:channel_coding_theorem}
Assuming 
$R<1-\muL\HN$, 
under Assumptions \ref{ass:N} and \ref{ass:L}, and those of
Proposition \ref{theorem:LDPU},
we have that the
probability that the conditional ML decoding  of SRGRAND is incorrect
decays exponentially in  $n$,
\begin{align}
\label{eq:noerror}
    \lim_{n\to\infty} \frac 1n \log P\left(U^n\leq\GN\left(\UNoise^{L^n}\right)\right)=
-\inf_{u\in[\muL\HN,1-R]}\{\IU(u)+\INL(u)\}<0.
\end{align}
If 
$\gstar$ exists such that \vspace{-0.3in}
\begin{align}
\label{eq:gstar}
\frac{d}{dg}\IN(g)|_{g=\gstar}=1,
\end{align}
which is analogous to one minus Gallager's critical rate,
then the joint error exponent of \eqref{eq:errorRL},
subject to a given proportion of potentially noise-impacted symbols
satisfies
\begin{align}
\label{eq:errorL}
    \errorL(R,l) &
	= \begin{cases}
	\IL(l)+1-R-l\Hhalf & \text{ if } R\in(0,1-l\gstar]\\
	\displaystyle \IL(l)+l\IN\left(\frac{1-R}{l}\right) & \text{ if } R\in[1-l\gstar,1-l\HN]\\
	\IL(l) & \text{ if } R\in(1-l\HN,1].
	\end{cases}
\end{align}
The unconditioned SRGRAND error rate
is
\begin{align}
\vspace{-0.3in}
\label{eq:error}
\error(R)=\inf_{l\in[0,1]}\errorL(R,l)
     &= -\lim_{n\to\infty} \frac 1n \log P\left(U^n\leq\GN\left(\UNoise^{L^n}\right)\right) 
	= \begin{cases}
	1-R-\LambdaL(\Hhalf) & \text{ if } R\in(0,1-\muL\gstar)\\
	\INL(1-R) & \text{ if } R\in[1-\muL\gstar,1-\muL\HN)\\
	0 & \text{ if } R\in(1-\muL\HN,1].
	\end{cases}
\end{align}

With $\delta>0$, abandoning guessing if $|\A|^{n (\muL\HN +\delta)}$ queries
have been made without identifying an element of the codebook, the
SRGRANDAB error rate is also negative,
\begin{align}
&\lim_{n\to\infty} \frac 1n \log P\left(\left\{U^n\leq\GN(\UNoise^{L^n})\right\}
	\cup\left\{\GN(\UNoise^{L^n})\geq |\A|^{n(\muL\HN+\delta)}\right\}\right)\nonumber\\
&=-\min\left(\inf_{u\in[\muL\HN,1-R]}\{\IU(u)+\INL(u)\},\INL(\muL\HN+\delta)\right)<0.
\label{eq:errorAB}
\end{align}
If, in addition, $\gstar$ defined in Eq. \eqref{eq:gstar} exists 
then the expression simplifies to 
 $  \errorAML(R) = \min\left(\error(R), \INL(\HN+\delta)\right)<0$
where $\error(R)$ is the conditional ML decoding error rate in Eq. \eqref{eq:error}.

\end{theorem}
\begin{proof}
As $\{U^n\}$ is independent of $\{(G(\UNoise^{L^n}),L^n)\}$, we have that
$\left\{\left(n^{-1} \log U^n, n^{-1} \log G(\UNoise^{L^n}), L^n/n\right)\right\}$
satisfies a LDP with rate-function $\IU(u)+\INLjoint(g,l)$. 
Noting the equivalence of the following two events,
\begin{align*}
\left\{U^n\leq\GN\left(\UNoise^{L^n}\right)\right\} =
\left\{\frac1n \log\left(U^n/\GN\left(\UNoise^{L^n}\right)\right)\leq0\right\}.
\end{align*}
By the contraction principle (e.g. \cite{Dembo98}[Theorem 4.2.1])
with the continuous function $f(u,g,l)=(u-g,l)$, the process 
$
\left\{
\left(\frac1n
\log\left(U^n/\GN\left(\UNoise^{L^n}\right)\right),
\frac{L^n}{n}\right)\right\}
$
satisfies a LDP with rate-function 
$\inf_{u\in[0,1-R]}\left\{\IU(u)+\INLjoint\left(u-x,l\right)\right\}$.

Consider $\errorL(R,l)$ defined in Eq. \eqref{eq:errorL},
where the limits exist as the rate-functions are convex and so
continuous on the interior of where they are finite,
\begin{align*}
\errorL(R,l) &= -\lim_{\delta\downarrow0}\lim_{n\to\infty}
	\frac1n \log P\left(\frac1n \log\left(U^n/\GN\left(\UNoise^{L^n}\right)\right)\leq0,
		\frac{L^n}{n}\in(l-\delta,l+\delta)\right)\\
	&= \inf_{x\leq0}\inf_{u\in[0,1-R]}\left\{\IU(u)+\INLjoint\left(u-x,l\right)\right\}
	= \inf_{u\in[0,1-R]}\left\{\IU(u)+\inf_{g\geq u} l\IN\left(\frac{g}{l}\right)\right\}+\IL(l).
\end{align*}
This final expression essentially encapsulates that the error
exponent is the exponent for the likelihood that the proportion of
potentially noise-impacted symbols, or size of the mask, is $l$, plus the smallest exponent
(corresponding to the most likely event) for the minimum of the
scaled uniforms being at $u$, while the scaled sub-ordinated guesswork
occurs at any value $g$ at least as large as $u$.

For $u\in[0,1-R]$, $\IU(u)=1-R-u$ is linearly decreasing, while
$l\IN(g/l)$ is convex in $g$ with minimum, zero, at $g=l\HN$. Thus
if $R\geq 1-l\HN$, setting $u=1-R$ and $g=l\HN$, $\errorL(R,l) =
\IL(l)$. 
If, alternatively, $R<1-l\HN$, then as both $\IU(u)$ and $l\IN(g/l)$, as a function
of $g$, are strictly decreasing on $[0,l\HN]$,
\begin{align*}
\inf_{u\in[0,1-R]}\left\{\IU(u)+\inf_{g\geq u} l\IN\left(\frac{g}{l}\right)\right\}
= \inf_{u\in[l\HN,1-R]}\left\{1-R-u+l\IN\left(\frac{u}{l}\right)\right\},
\end{align*}
which is strictly positive as $\IU$ is strictly decreasing to $0$
on $[l\HN,1-R]$ while $l\IN(u/l)$ is strictly increasing in $u$ on the
same range. Assuming $\gstar$ defined
in Eq. \eqref{eq:gstar}, exists, as $\IU$ is decreasing at
rate $1$ and
$
\frac{d}{dg}l\IN\left(\frac{g}{l}\right)|_{g=l\gstar}=1,
$
then if $l\gstar\leq1-R$, i.e. if $R\leq 1-l\gstar$, 
\begin{align*}
\inf_{u\in[l\HN,1-R]}\left\{1-R-u+l\IN\left(\frac{u}{l}\right)\right\}
= 1-R-l\gstar+l\IN\left(\frac{l\gstar}{l}\right)
= 1-R-l\gstar+l\IN\left(\gstar\right)
= 1-R-l\Hhalf,
\end{align*}
as $\IN(\gstar)=\gstar-\Hhalf$. If, instead, $l\gstar\geq1-R$, then the infimum
occurs at $u=1-R$ and
\begin{align*}
\inf_{u\in[l\HN,1-R]}\left\{1-R-u+l\IN\left(\frac{u}{l}\right)\right\}
=
l\IN\left(\frac{1-R}{l}\right) \text{ if } R\in[1-l\gstar,1-l\HN],
\end{align*}
and the expression in \eqref{eq:errorL} follows. 
The unconditional error exponent, $\error(R)$ in Eq.
\eqref{eq:error}, is obtained from that in \eqref{eq:errorL} by the contraction
principle, projecting out $L^n/n$, giving $\error(R)=\inf_{l\in[0,1]}\errorL(R,l)$. 
If $R\geq 1-\muL\HN$, then $\error(R)=\errorL(R,\mu^L)=0$.
If $R\in[1-l\gstar,1-l\HN]$, then 
$
\error(R) = \inf_l\left\{\IL(l)+l\IN\left(\frac{1-R}{l}\right)\right\} = \INL(1-R).
$
Finally, if $R\in(0,1-l]$, then
$
\error(R) = \inf_l \left\{\IL(l)+1-R-l\Hhalf\right\}
= (1-R)-\inf_l\left\{l\Hhalf-\IL(l)\right\} = 
	1-R-\LambdaL(\Hhalf),
$
inverting the Legendre-Fenchel transform in the last step.

To determine the error exponent of SRGRANDAB, by the Principle of
the Largest Term \cite[Lemma 1.2.15]{Dembo98} it suffices to
consider only the smallest of the two exponential rates in Eq.
\eqref{eq:errorAB}. The first term is the error rate for GRAND. The
 second term is the exponent of the probability of
error due to abandonment of guessing. Note that
$
P\left(\GN(\UNoise^{L^n})\geq |\A|^{n(\muL\HN+\delta)}\right)
=P\left(\frac1n\log \GN(\UNoise^{L^n})\geq \muL\HN+\delta\right)
$
and the result follows from the LDP as $\INL(x)$ is convex and
increasing for $x>\muL\HN$.
\end{proof}

Consider the error exponent for the conditional ML decoding 
via SRGRAND, $\errorL(R,l)$ in \eqref{eq:errorL}. The exponent for the likelihood that the proportion
of potentially noise-impacted symbols, $L^n/n$,  which is approximately
$l$, is $\IL(l)$.  The error-exponent is  then as in a channel where only a proportion $l$ of transmitted
symbols are in the mask of symbols  subject to noise \cite{Duffy19}. The unconditional
equivalent, $\error(R)$ in Eq. \eqref{eq:error} identifies the
most likely proportion of noise-impacted symbols that may give rise to an error for a given codebook rate. For SRGRANDAB,
an error occurs either if the identified conditional ML decoding is in error
or if abandonment occurs. The more likely of these two events dominates in the limit.

Combining Propositions \ref{theorem:LDPN} and \ref{theorem:LDPU}
in a distinct way enables us to determine the asymptotic complexity of the
SRGRAND and SRGRANDAB in terms of the number of queries until a
decoding, correct or incorrect, is identified: 
$
\GML^n := \min\left(\GN\left(\Noise^{L^n}\right),U^n\right). $
That is, the algorithm terminates when the channel noise
or a non-transmitted element of
the codebook are identified, whichever occurs first.
On the scale of large deviations, if the codebook is within capacity,
$R<1-\muL\HN$, then it becomes apparent that the sole impact
of the codebook is to curtail excessive guessing when unusual noise
occurs.
The number of guesses SRGRANDAB makes until terminating is 
$
\GAML^n := \min\left(\GN\left(\Noise^{L^n}\right),U^n,|\A|^{n(\muL\HN+\delta)}\right). 
$
The final term corresponds to the abandonment threshold, curtailing
guessing shortly after 
the Shannon typical set for an average
number of potentially noise impacted symbols.

\begin{theorem}[Complexity of SRGRAND and SRGRANDAB]
\label{prop:IMLE}
If $R<1-\muL\HN$,
under Assumptions \ref{ass:N} and \ref{ass:L}, and those
of Proposition \ref{theorem:LDPU}, the scaled complexity
of SRGRAND, $\{1/n \log \GML^n\}$,
satisfies the LDP with a convex  rate-function
\vspace{-0.2in}
\begin{align}
    \IMLE(d) = 
	\begin{cases}
    \INL(d) & \text{ if } d\in[0,1-R]\\
    +\infty & \text{ if } d>1-R
    \end{cases}
\label{eq:IMLE}
\end{align}
and the expected number of guesses for SRGRAND to find a conditional ML decoding satisfies
$
	\lim_{n\to\infty}\frac 1n \logA E(\GML^n) = \min\left(\LambdaL(\Hhalf),1-R\right).$
With $\delta>0$, the complexity of SRGRANDAB,
$\{1/n \log \GAML^n\}$, satisfies a LDP with a convex rate function
\begin{align}
    \IMLEAB(d) = 
	\begin{cases}
  \INL(d) & \text{ if } d\in[0,\min(1-R, \muL\HN)]\\
    +\infty & \text{ if } d>\min(1-R, \muL\HN)
    \end{cases}
\label{eq:IMLEAB}
\end{align}
and the expected number of guesses until SRGRANDAB terminates, $\{\GAML^n\}$, satisfies
$
\lim_{n\to\infty} \frac1n \logA E(\GAML^n) = \min\left(\LambdaL(\Hhalf),1-R,\mu\HN+\delta\right).$
\end{theorem}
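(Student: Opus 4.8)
The plan is to prove the complexity results for SRGRAND and SRGRANDAB by assembling the already-established LDPs via the contraction principle and then inverting Legendre--Fenchel transforms to read off the growth rates of the expectations. First I would recall that $\GML^n=\min(\GN(\UNoise^{L^n}),U^n)$ is the minimum of two quantities whose rescaled logarithms are independent: $\{1/n\log\GN(\UNoise^{L^n})\}$ satisfies the LDP with rate-function $\INL$ (Proposition \ref{theorem:LDPN}) and $\{1/n\log U^n\}$ satisfies the LDP with rate-function $\IU$ (Proposition \ref{theorem:LDPU}). By independence the pair $\{(1/n\log\GN(\UNoise^{L^n}),1/n\log U^n)\}$ satisfies the LDP with the sum rate-function $\INL(g)+\IU(u)$. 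Since $\min$ corresponds to $1/n\log\GML^n=\min(1/n\log\GN(\UNoise^{L^n}),1/n\log U^n)$, which is a continuous function of the pair, the contraction principle (\cite{Dembo98}[Theorem 4.2.1]) gives a LDP for $\{1/n\log\GML^n\}$ with rate-function $\IMLE(d)=\inf\{\INL(g)+\IU(u):\min(g,u)=d\}$.

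The key step is to simplify this contracted infimum to the claimed form in equation \eqref{eq:IMLE}. The infimum over $\{\min(g,u)=d\}$ splits into two cases: either $g=d\le u$ or $u=d\le g$. Using that $\IU$ is finite only on $[0,1-R]$ where it equals $1-R-u$ and is zero at $u=1-R$, and that $\INL$ is zero at $\muL\HN$ and increasing thereafter, one checks that for $d\le 1-R$ one may send the ``slack'' variable to its zero of the rate-function (taking $u=1-R$ when $g=d$, or $g=\muL\HN\le d$ when $u=d$ for $d\ge\muL\HN$), so the minimum of the two contributions is $\INL(d)$; for $d>1-R$ the constraint forces $u>1-R$, where $\IU=+\infty$, giving $\IMLE(d)=+\infty$. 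This is the point I expect to require the most care: verifying that the code-book-within-capacity hypothesis $R<1-\muL\HN$, hence $\muL\HN<1-R$, ensures the zero of $\INL$ lies in the feasible region, so that the capping at $1-R$ by the code-book is exactly what curtails the heavy tail of the subordinated guesswork.

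Having the rate-function, the expected complexity follows from Varadhan's Lemma applied with the linear functional $\alpha=1$, or more directly by relating $\lim_n (1/n)\log E(\GML^n)$ to the sCGF. Because $\GML^n$ is a minimum, the dominating exponential growth of $E(\GML^n)$ is governed by the smaller of the two unconstrained growth rates: the average guesswork rate $\LambdaNL(1)=\LambdaL(\Hhalf)$ from equation \eqref{eq:lambdaNL1}, and the cap $1-R$ coming from $\lim_n(1/n)\log E(U^n)=1-R$ in Proposition \ref{theorem:LDPU}. This yields $\lim_n (1/n)\logA E(\GML^n)=\min(\LambdaL(\Hhalf),1-R)$. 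The argument must confirm that the truncation at $1-R$ genuinely caps the heavy-tailed mean of the guesswork, which is where the finiteness boundary of $\IMLE$ at $d=1-R$ does the work.

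For SRGRANDAB the only change is the additional deterministic abandonment term, so $\GAML^n=\min(\GN(\UNoise^{L^n}),U^n,|\A|^{n(\muL\HN+\delta)})$ and $1/n\log\GAML^n=\min(1/n\log\GML^n,\muL\HN+\delta)$. Since the last term is deterministic and converges to $\muL\HN+\delta$, I would obtain the LDP for $\{1/n\log\GAML^n\}$ by another application of the contraction principle, capping the support of $\IMLE$ at $\min(1-R,\muL\HN)$: any realization exceeding $\muL\HN+\delta$ is replaced by the threshold, and pushing $\delta\downarrow0$ in the rate-function identifies the finiteness boundary as $\min(1-R,\muL\HN)$, giving \eqref{eq:IMLEAB}. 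The corresponding expected complexity follows identically, with the third competing growth rate $\muL\HN+\delta$ entering the minimum, so that $\lim_n(1/n)\logA E(\GAML^n)=\min(\LambdaL(\Hhalf),1-R,\muL\HN+\delta)$, which I would again read off either from Varadhan's Lemma or directly from the fact that the expectation of a minimum is dominated by the smallest exponential growth rate among the three competing terms.
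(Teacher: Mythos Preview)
Your approach matches the paper's: joint LDP by independence, contraction with $\min$, case-split of the constraint set, then Varadhan's Lemma to recover the growth rate of the expectation, and a second contraction with the degenerate deterministic term for \tb{SRGRANDAB}.

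The one place you correctly flag as delicate but do not actually close is the comparison needed to conclude $\IMLE(d)=\INL(d)$ on $[0,\muL\HN]$, i.e.\ that $\INL(d)\le\IU(d)$ there. Knowing only that the zero of $\INL$ sits at $\muL\HN<1-R$ is not enough by itself; you also need control at $d=0$. The paper supplies this via the min-entropy bound
\[
\INL(0)=\inf_{l}\{\IL(l)+l\,\IN(0)\}=\inf_{l}\{\IL(l)+l\,\Hmin\}\le \muL\Hmin\le\muL\HN<1-R=\IU(0),
\]
and then convexity of $\INL$ together with the linearity of $\IU$ forces $\INL\le\IU$ on the whole interval. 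Without this anchor at the left endpoint, your ``send the slack variable to its zero'' argument establishes only that the infimum is $\min(\INL(d),\IU(d))$ for $d\le\muL\HN$, not that the minimum is $\INL(d)$.

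For the expectation, your heuristic that ``the expectation of a minimum is dominated by the smallest exponential growth rate'' is the right intuition but is not a proof; the paper does exactly what you suggest first, namely compute $\sup_{d}\{d-\IMLE(d)\}$ via Varadhan's Lemma, and it is this computation (using the truncated support of $\IMLE$ at $1-R$) that produces the $\min(\LambdaL(\Hhalf),1-R)$ cleanly. You should carry that computation out rather than appeal to the heuristic.
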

\begin{proof}
Consider the process $\{n^{-1}\log \GML^n\}$, following \cite{Duffy19}[Proposition 2],
as $f(g,u)=\min(g,u)$ is a continuous function, by the contraction principle it 
satisfies a LDP with rate-function
$\IMLE(d) = \inf\{\INL(g)+\IU(u):\min(g,u)=d\}$. 
If $d>1-R$, $\IMLE(d)=\infty$ as $\IU(d)=\infty$ for $d>1-R$. Alternatively,
if $d\leq 1-R$,
\begin{align*}
\IMLE(d) = \min\left(\INL(d)+\inf_{x\geq d}\IU(x), \inf_{x\geq d}\INL(x)+\IU(d)\right)
= \min\left(\INL(d), \inf_{x\geq d}\INL(x)+\IU(d)\right)
\end{align*}
as $\IU(x)$ is decreasing for $x\in[0,1-R]$. If $R<1-\muL\HN$, then
note the geometric consideration
\begin{align*}
\INL(0)
=\inf_{l}\left\{\IL(l)+l\IN(0)\right\} 
= \inf_l\left\{\IL(l)+l\Hmin\right\} 
\leq\muL \Hmin,
\end{align*}
where in the last inequality we have set $l=\muL$.  As min-entropy
is less than Shannon entropy $\muL\Hmin\leq \muL\HN<1-R$ and as
$\INL$ is convex, $\INL(d)\leq \IU(d)$ for all $d\in[0,\HN]$ while
$\INL(d)$ is increasing on $[\HN,1-R]$ and so $\IMLE(d)=\INL(d)$
for $d\in[0,1-R]$.

To obtain the scaling result for $E(\GML^n)$ we invert the
transformation from the rate function $\IMLE$ to its Legendre-Fenchel
transform, the sCGF of the process
$\{n^{-1}\log \GML^n\}$ via Varadhan's Theorem \cite{Dembo98}[Theorem
4.3.1]. In particular, note that, regardless of whether $\IMLE$ is
convex or not,
\begin{align*}
        \lim_{n\to\infty}\frac 1n \log E(\GML^n) &=
        \lim_{n\to\infty}\frac 1n \log E\left(|\A|^{\log\GML^n}\right) 
        = \sup_{d\in\R} \{d-\IMLE(d)\} = \min\left(\LambdaL(\Hhalf),1-R\right).
\end{align*}

The final component of the minimum
satisfies an LDP with a rate function
$0  \text{ if } d=\mu+\delta$ and 
$+\infty  \text{ if } d\neq\mu+\delta$. 
and, again, as minimum is continuous by the contraction principle
the LDP with a rate-function given in Eq. \eqref{eq:IMLEAB}
and the scaling of $E(\GAML^n)$ follows from similar considerations.
\end{proof}

Theorem \ref{prop:IMLE} effectively says that in SRGRAND the algorithm
terminates with a correct decoding so long as the number of queries
made before identifying an element of the codebook is less than
$|\A|^{n(1-R-\epsilon)}$ for some $\epsilon>0$.  If more queries than that are
made, the conditional ML decoding will be erroneous.  SRGRAND queries until it  identifies the true noise or until
an erroneous identification, whichever comes first. In this realization
of SRGRANDAB, querying is abandoned for noise sequences beyond the
typical set of the average number of potentially noise impacted
symbols, curtailing complexity. %

\section{Mathematical Example: Symbol Reliability Binary Symmetric Channel (SR-BSC)} 
\label{sec:SR-BSC}

We consider a setting where it is possible to mathematically
compare
channels with and without knowledge of the symbol reliability
information vector $S^n$, the Symbol Reliability Binary Symmetric
Channel (SR-BSC). For the SR-BSC, we assume that each transmitted
symbol is potentially impacted independently by noise with probability
$p_S(1)=q\in[0,1]$. Code-book and noise symbols take values in a
binary alphabet $\A=\{0,1\}$, $\oplus$ is addition in $\F2$, and
thus $0$ represents the no-noise character. Given a symbol has been
potentially noise-impacted, 
the conditional probability
that the corresponding bit has been flipped is $p_{N|S}(1|1)=p\in[0,1]$,
$p_{N|S}(0|1)=1-p$ and $p_{N|S}(0|0)=1$. The overall bit-flip probability of the SR-BSC is thus $pq$. We consider capacity and
 error exponents, which are properties of ML decoding no
matter whether it is identified by the noise-guessing methodology
or by brute force, as well as
complexity, which is a feature of the noise-guessing approach.  From Eq.
\eqref{eq:CSOFT}, the capacity of the symbol reliability channel is
$\CSOFT(q,p) = 1-q \h2(p),$
where $\h2(p) = -(1-p)\logT(1-p)-p\logT(p)$ is the binary Shannon entropy.
The corresponding hard detection channel
is a BSC with probability $P(N=1)=P(N=1|S=1)P(S=1)=pq$
and so the hard detection channel capacity is
$\CHARD(q,p) = 1- \h2(pq). $
As $\h2$ is concave, $\CSOFT(q,p)\geq \CHARD(q,p)$ for all $q$
and $p$, and so the capacity of the channel with symbol reliability
information is necessarily higher. Depending on the parametrization,
the symbol reliability channel's capacity can be several orders
of magnitude larger than the hard detection capacity.

As the symbol reliability information is constructed of i.i.d.
elements, the rate function governing the LDP for the proportion
of noise impacted symbols, $\{L^n/n\}$ in Assumption \ref{ass:L},
is the Kullback-Leibler divergence, 
$
\IL(l) =
 -(1-l) \logT\left(\frac{1-l}{1-q}\right) 
 -l \logT\left(\frac{l}{q}\right)$,
which has the corresponding sCGF
\begin{align}
\label{eq:BernLambdaL}
\LambdaL(\alpha) = \logT\left(1-q+q 2^{\alpha}\right).
\end{align}
The rate function for LDP of the rescaled guesswork $\{1/n\logT
G(N_1^n)\}$ in Eq. \eqref{eq:rf} is the Legendre-Fenchel
transform, $\IN(g) = \sup_\alpha\left(\alpha g - \LambdaN(\alpha)\right)$,
of 
\begin{align}
\label{eq:BernLambdaN}
\LambdaN(\alpha) = 
\begin{cases}
-\logT\max(p,1-p) & \text{ if } \alpha\leq -1 \\
-p\logT(p)-(1-p)\logT(1-p) & \text{ if } \alpha=1\\
(1+\alpha)\logT\left(p^{1/(1+\alpha)}+(1-p)^{1/(1+\alpha)}\right) & \text{ if } \alpha \in(-1,1)\cup(1,\infty).\\
\end{cases}
\end{align}
From Eq. \eqref{eq:LambdaNL}, the sCGF for the subordinated guesswork of true noise is $\LambdaNL(\alpha) = \LambdaL(\LambdaN(\alpha))$,
where $\LambdaL$ and $\LambdaN$ are given by Eq.
\eqref{eq:BernLambdaL} and Eq. \eqref{eq:BernLambdaN}, respectively.
The exponent of the average complexity required to identify 
the true noise in the symbol reliability channel is given by
$\lim_{n\to\infty} n^{-1} \logT E\left(G\left(N_1^{L^n}\right)\right) 
= \LambdaNL(1) 
= \LambdaL(\LambdaN(1)) 
= \logT\left( 1-q + q 2^{2 \logT(p^{1/2} +(1-p)^{1/2})}\right)$,
while for the hard detection channel it is
$
\lim_{n\to\infty} n^{-1} \logT E(G(N^n)) = 2 \logT\left((pq)^{1/2} +(1-pq)^{1/2}\right).
$

\begin{figure*}
\begin{center}
\includegraphics[width=0.8\textwidth]{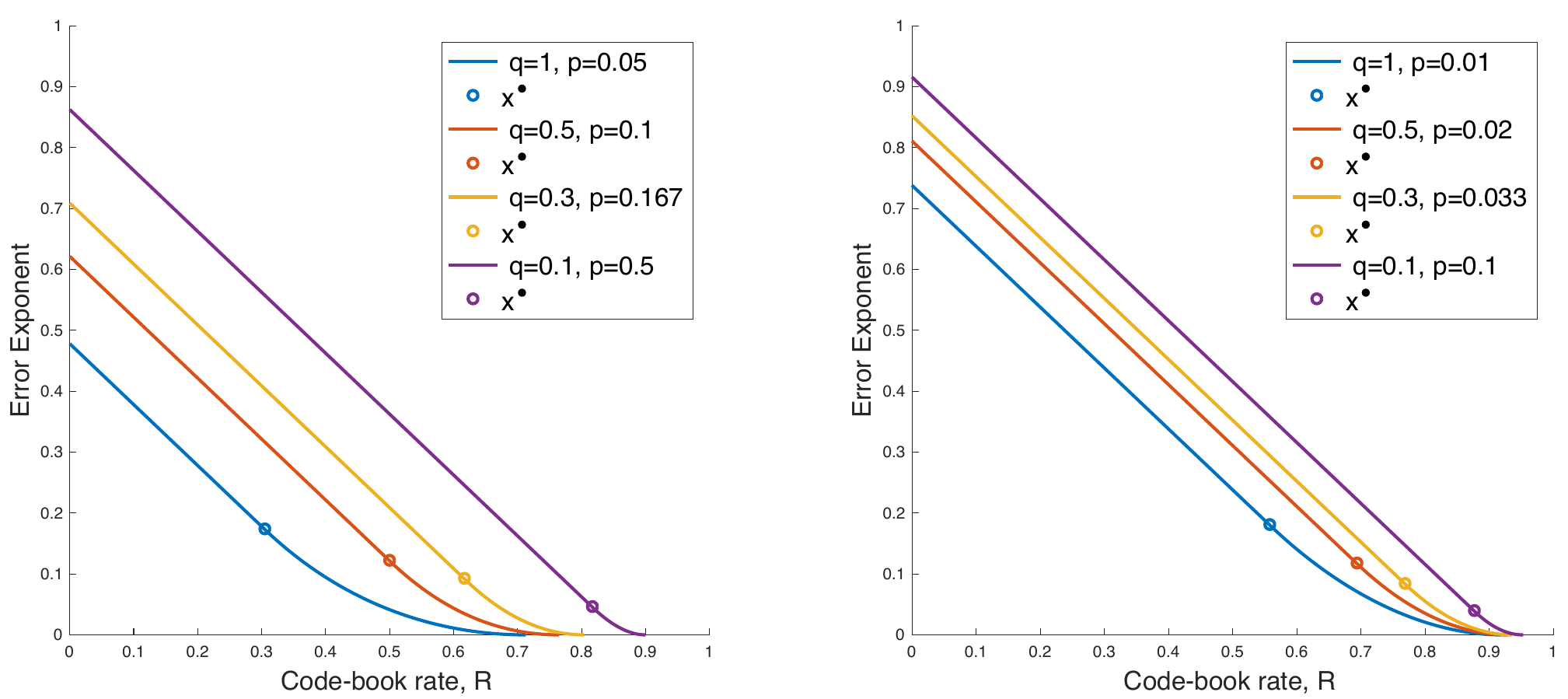}
\vspace{-0.2in}
\end{center}
\caption{Block error exponent comparison between BSCs with and
without symbol reliability information. 
$pq$, the overall bit-flip probability, is constant. Error exponents are
plotted as a function of codebook rate $R$. In the left hand side
plot $pq=0.05$, and on the right-hand side $pq=0.01$. Circles
indicate Gallager's critical rate. The lowest line has $q=1$ and
is the error exponent of the hard detection channel. Higher lines
correspond to different $(q,p)$ combinations and have larger error
exponents, meaning decoding errors are less likely.}
\vspace{-0.2in}
\label{fig:4}
\end{figure*}

Armed with the sCGFs for the proportion of potentially noise impacted
bits and for the rescaled logarithm of the guesswork of potentially
noise impacted bits, the asymptotic error exponent given in
\eqref{eq:error} is readily computable numerically. Recall that,
as a function of the codebook rate $R$, this is the exponent in
the decay rate in the likelihood than a conditional ML decoding is in error
as the block length increases.


\begin{figure*}
\begin{center}
\includegraphics[width=0.4\textwidth]{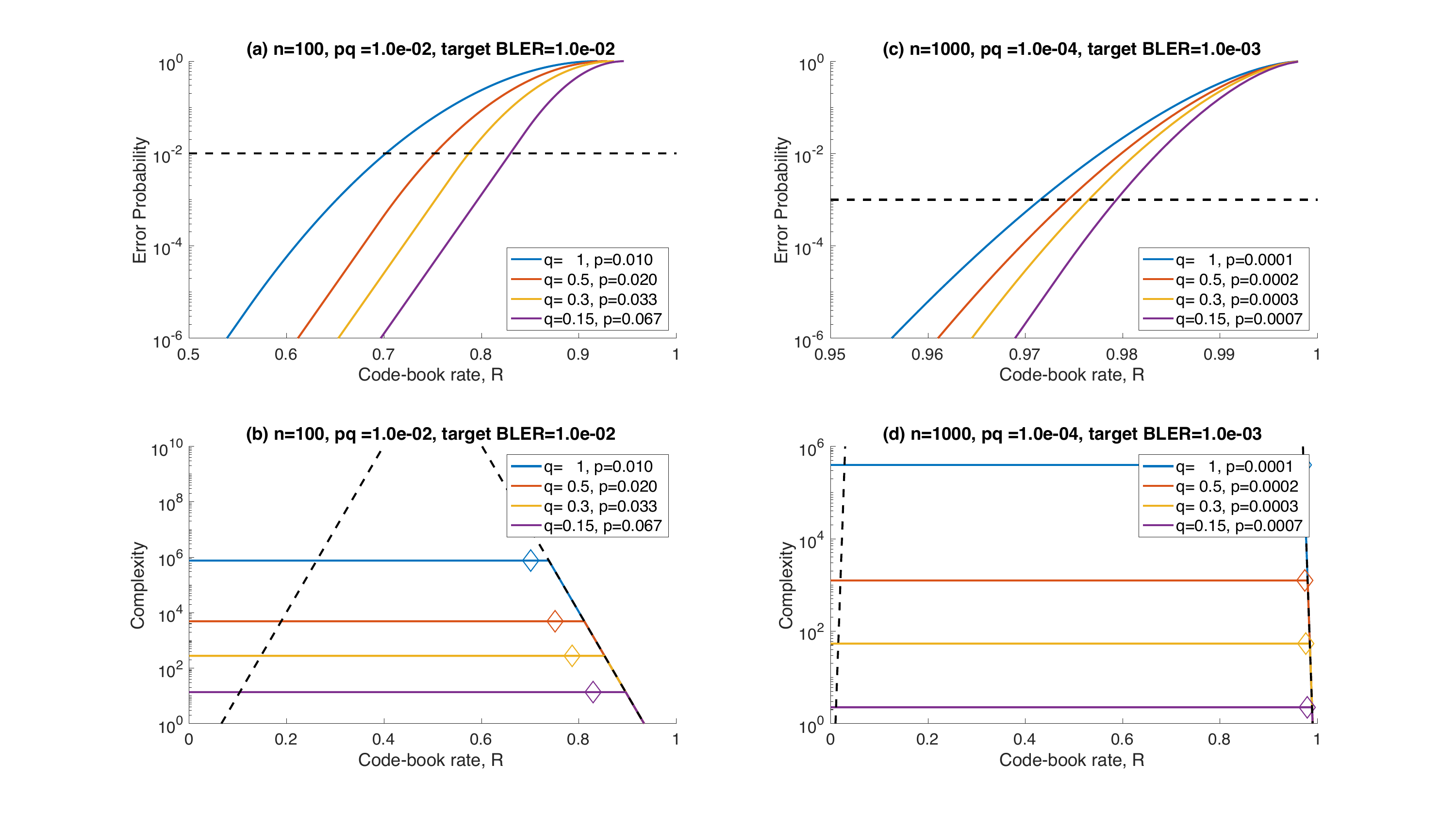}
\includegraphics[width=0.4\textwidth]{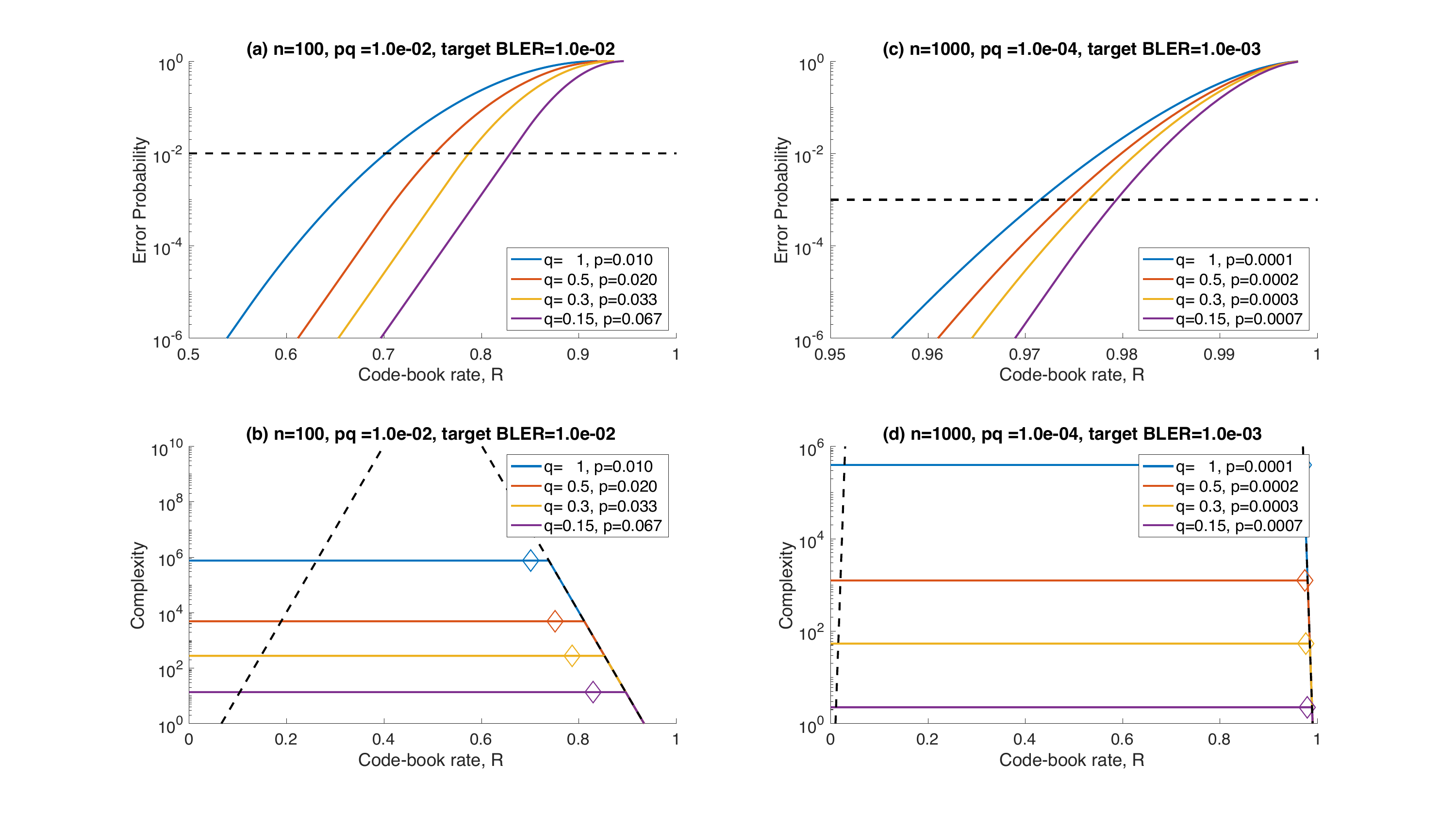}\\
\includegraphics[width=0.4\textwidth]{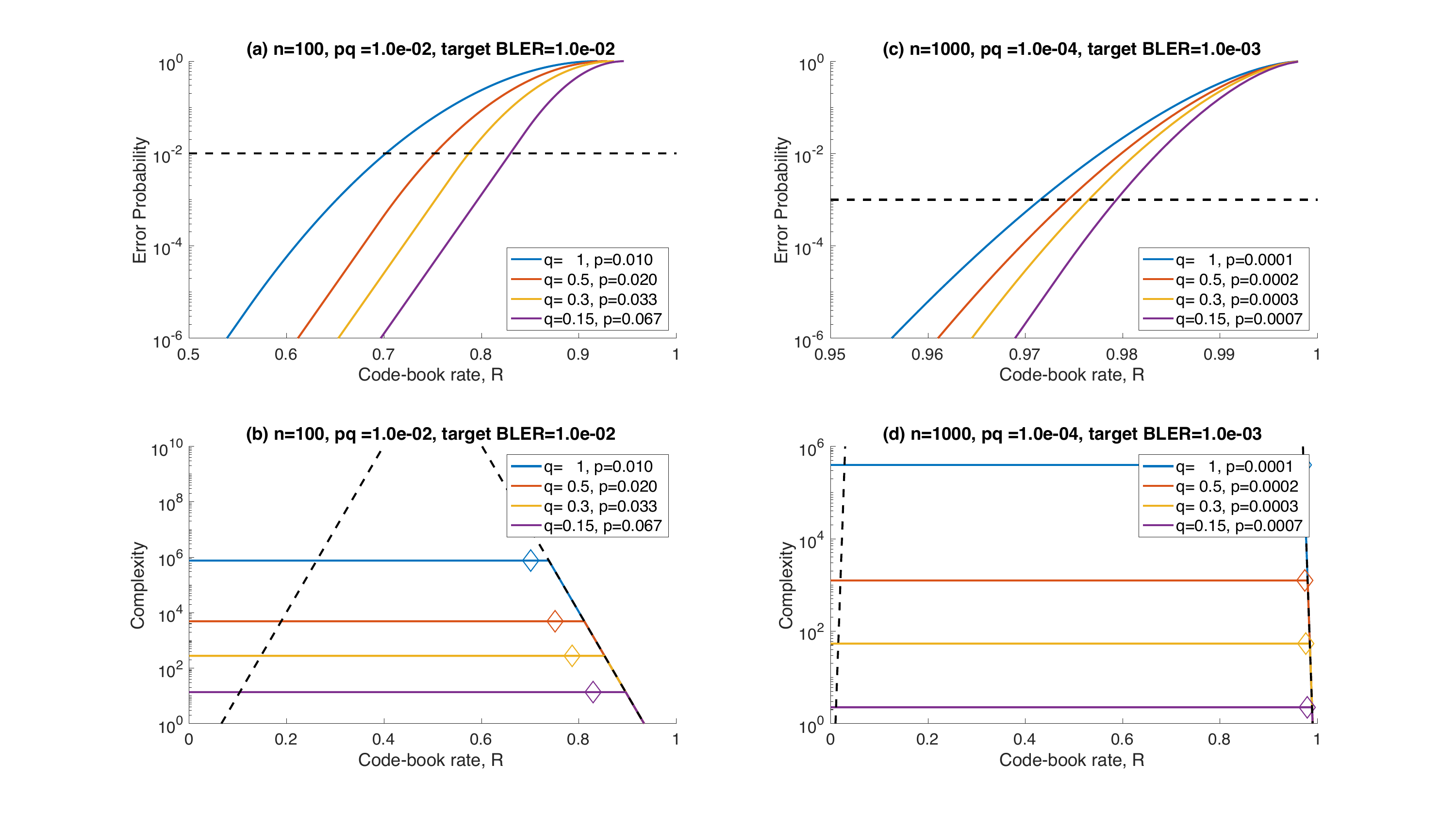}
\includegraphics[width=0.4\textwidth]{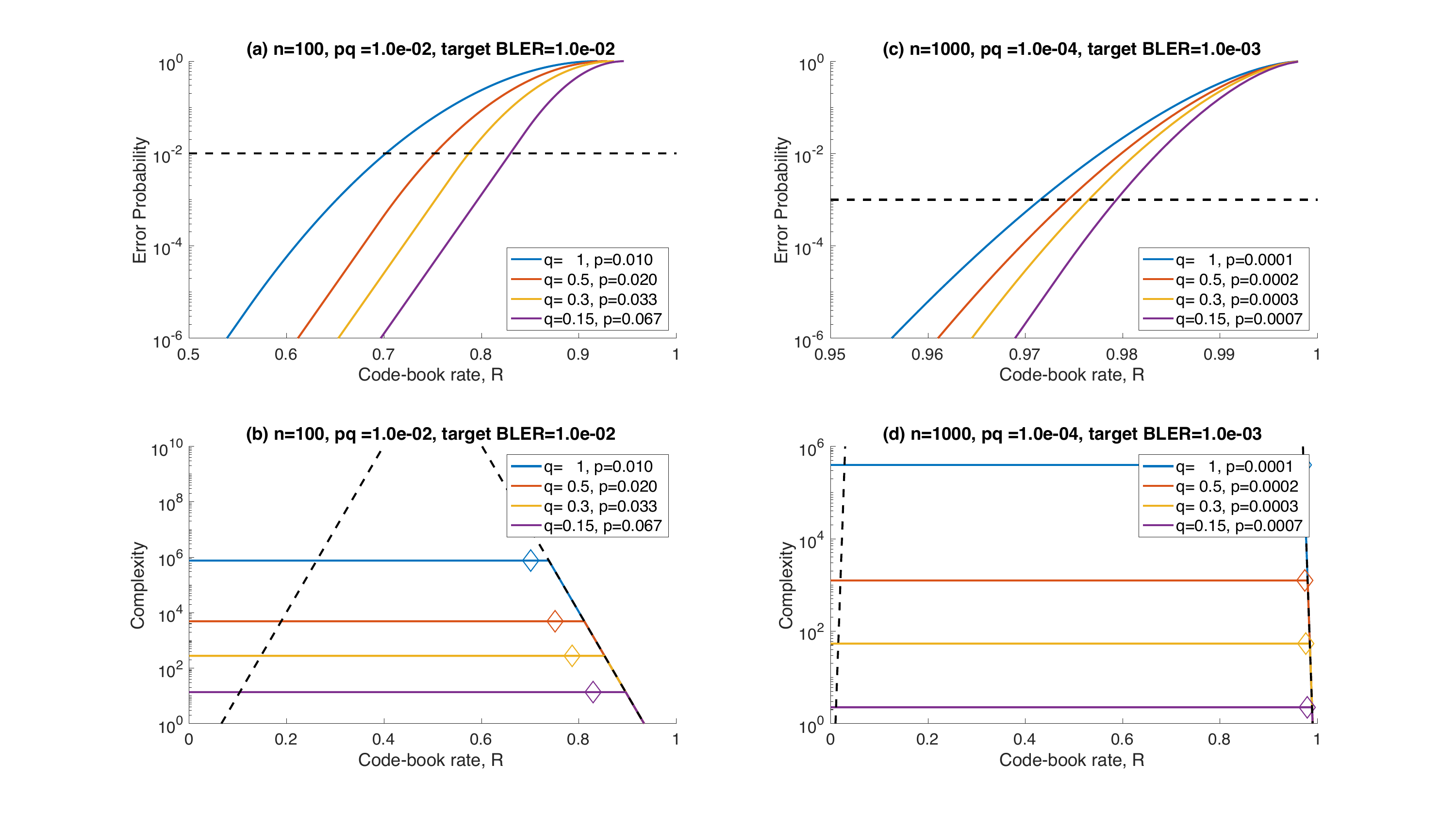}
\vspace{-0.2in}
\end{center}
\caption{Approximate block error probability and complexity for
 BSCs with and without symbol reliability information. The BSC without symbol reliability corresponds to the SR-BSC with $q=1$, and
 the overall bit-flip probability, $pq$, is constant.
(a-b) Show
results for $n=100$, $pq=10^{-2}$ and a target block error of
$10^{-2}$. In (a), the horizontal dashed line is the
target block error and approximate block error probabilities are
shown as a function of codebook rate, $R$, for a selection of
$(q,p)$ pairs. (b) shows the approximate complexity, in terms of average
number of guesses per-bit to
identification of a codebook element, which decreases with $q$,  even though
$p$ is increasing. The dashed black line gives 
complexity for the 
approach of \cite{BCJR74}, 
Diamonds indicate the rate above which
the target block error rate would be exceeded, 
while the inflection point occurs at cut-off rate. (c-d)
show corresponding results for $n=1000$, $pq=10^{-4}$ and a
target block error of $10^{-3}$. 
}
\vspace{-0.2in}
\label{fig:5}
\end{figure*}

While prefactors are not captured in the asymptotic analysis of
Theorems \ref{prop:channel_coding_theorem} and \ref{prop:IMLE},
they allow the following approximations. The conditional
ML probability of error is approximately $2^{-n\error(R)} 
        \text{ for } R <1-q\h2(p)$,
which holds true regardless of whether it is identified by SRGRAND or brute force,
where the expression for $\error(R)$ can be found in Eq. \eqref{eq:error}. 
For SRGRAND decoding, our measure of complexity is the average number of
guesses per bit per decoding, approximately
$2^{n\min(1-R,\LambdaL(\Hhalf))}/n.$
For comparison, we define the complexity of the computation
of the ML decoding in Eq. \eqref{eq:straight_MLE} 
by the method described
in \cite{BCJR74}
to be
the number of conditional probabilities that must be computed
per bit before rank ordering and determining the most likely codebook
element, equal to 
$2^{n\min(R,1-R)}/n$,
where we are equating the work performed in one noise guess, which
amounts to checking if a string is an element of the codebook,
with the computation of one conditional probability.

For two values of block size, $n=100$ and $n=1000$, and $(q,p)$
pairs such that $pq$ is constant and so comparable with the hard
detection channel, Fig.~\ref{fig:5} plots the approximate error
probabilities and complexity as a function of codebook rate. The
upper panels show the error probabilities with a target block
error rate indicated by the dashed horizontal line. 
The provision of symbol reliability information greatly improves the block
error probability, even though in this comparison the conditional probability
of a bit flip given symbol reliability information increases as the symbol reliability 
probability decreases.

The lower two panels show the approximate complexity. The dashed
line gives the approximate complexity for the approach in \cite{BCJR74}, which grows exponentially in $R$,
when computing a conditional probability for every codeword. In
contrast, the complexity of the SRGRAND approach is initially flat. As the rate, $R$, increases, eventually the
SRGRAND complexity drops, as encountering an erroneous element of the
codebook clips the long guessing tail of true noise.  The diamonds
indicate the rate above which the target block error rate would be
violated. 

\section{Empirical Performance Evaluation} 
\label{sec:perf}
\begin{figure*}
\begin{center}
\vspace{-0.2in}
\includegraphics[width=0.7\textwidth]{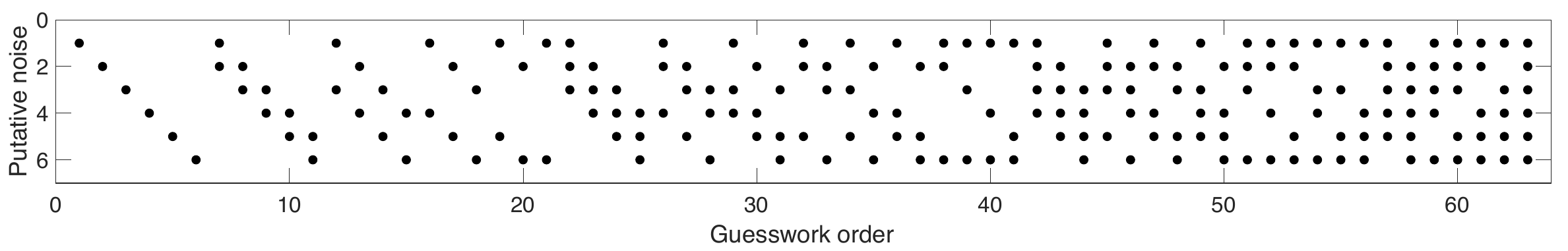}\\
\includegraphics[width=0.35\textwidth]{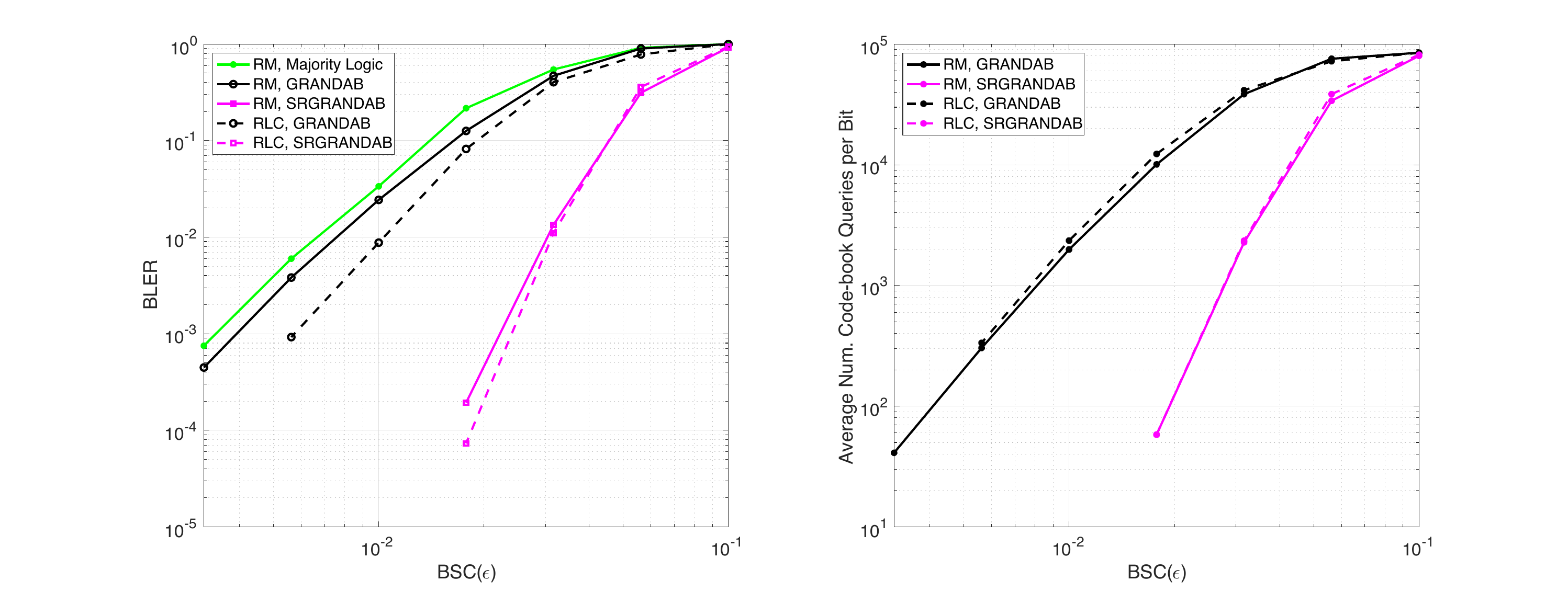}
\includegraphics[width=0.35\textwidth]{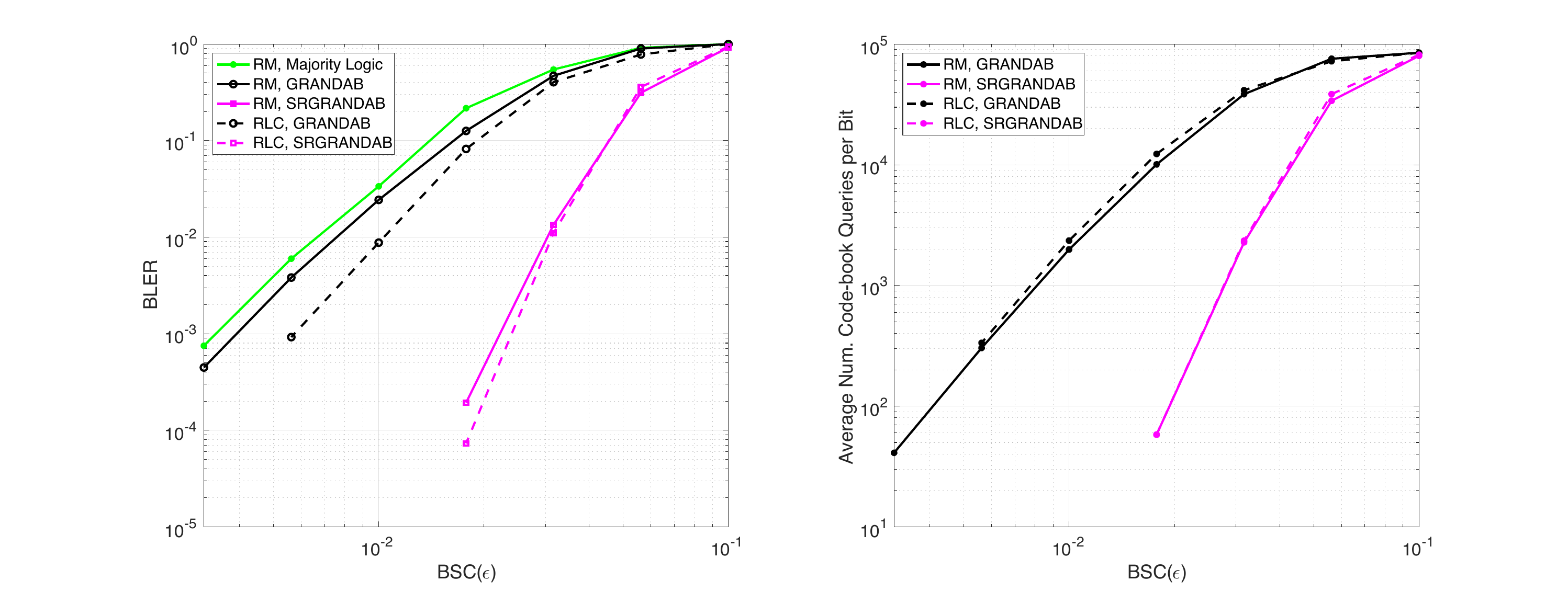}
\vspace{-0.2in}
\end{center}
\caption{
Performance evaluation with RM and RLC $[128,99]$, rate
$0.77$, codes
in an SR-BSC. Upper panel illustrates GRAND and SRGRAND guessing
order on an $n=7$ code, where each column is a putative noise
sequence with a dot indicating a $1$, and sequences are queried in
order from left to right. Left panel gives BLER performance
for majority logic decoding, GRANDAB and SRGRANDAB decoding of a
RM code as well as GRANDAB and SRGRANDAB decoding of RLCs. Lower
right panel gives average number of codebook queries per bit per
decoding for GRANDAB and SRGRANDAB.\vspace{-0.2in}
}
\label{fig:6}
\end{figure*}

A distinctive aspect of the GRAND approach is that it is readily implemented
and can be used
with any block code construction. While the theoretical results
 in Section \ref{sec:SR-BSC} are for uniform-at-random
codebooks, in practice nearly all error correction codebooks are
 linear in a finite field with $k$ input bits and $n$
coded bits.  Associated with each code is a check matrix $H^{n\times
n-k}\in\{0,1\}^{n\times n-k}$ and to test if a string, $y^n$, is
in the codebook a single matrix multiplication and comparison,
$H^{n\times n-k}(y^n)^T\MYEQ(0^{n-k})^T$, suffice, in the appropriate
field. Here we compare the decoding performance of GRANDAB, SRGRANDAB
for four types of binary linear codes.

\subsection{The Symbol Reliability Binary Symmetric Channel}

In the context of the SR-BSC introduced in Section \ref{sec:SR-BSC}, when the unconditional bit flip
probability is $\epsilon$, we set the probability that a bit is
marked as unreliable to be $q=\sqrt{\epsilon}$ and the bit flip
probability conditioned on unreliability to be $p=\sqrt{\epsilon}$.
For hard detection GRANDAB, putative noise strings are queried in
order of Hamming weight. Within each set of strings with the same
Hamming weight, the ordering is arbitrary and we do so in
the order illustrated in Fig.~\ref{fig:6}, first panel. For SRGRANDAB,
we assume that the channel state is known and use the same search
pattern, but confined to querying only bits for which the channel
state was marked as unreliable for any given communication. For
GRANDAB, we set the abandonment threshold to check for up to four
bit-flips. For SRGRANDAB, we allow the same number of codebook
queries as GRANDAB before abandoning and reporting a decoding error.

RM codes, which only exist for some $[n,k]$ pairs,
are broadly used in wireless communications and have a well-established
hard detection decoder, majority logic decoding \cite{shu2004}.
Fig.~\ref{fig:6} reports Block Error Rates (BLER) as a function of
the bit flip probability $\epsilon$ for a rate $0.77$, $[128,99]$, RM
code. As majority logic
decoding is tailored to a BSC and is known to be accurate in that
setting, its performance is only slightly degraded from the ML BLER
that GRANDAB provides. The provision of reliability information to
SRGRANDAB gives it a distinct advantage, resulting in significantly
enhanced BLER. The right panel reports the average number of codebook
queries per received bit that GRANDAB and SRGRANDAB make. As each query solely
requires a matrix multiplication by a sparse vector, for typical
target BLER of $10^{-2}$ or lower,
the complexity requirements of GRANDAB and SRGRANDAB are modest.

Since the 1960s, RLCs have been known to be capacity-achieving if
twined with ML decoding \cite{Gal68} with the same error exponents 
as those for uniform-at-random codebooks \cite{DZF16}.
Those results hinge on a proof that at high rates the average RLC is a good one. The lack of an efficient
decoder that can accurately decode any linear, high-rate codebook
has meant, however, that this avenue is little explored. Here we
consider the application of GRANDAB and SRGRANDAB for decoding RLCs. For any $[n,k]$ pair we can construct systematic binary RLCs by making a
random generator matrix $\left[I^{k\times k} | \LC^{k\times
n-k}\right]$, where $I^{k\times k}$ is the identity matrix and the
entries of the random check matrix $\LC^{k\times n-k}$ are independent
Bernoulli 1/2 random variables. To check if $y^n$ is a member of
the codebook, one can test if $y^k \LC^{k\times n-k}\MYEQ
(y_{k+1},\ldots,y_n)$, obviating the need for the receiver to
determine the associated check matrix. Consistently with theoretical
results, in an empirical evaluation codes are re-randomized after
each use. In practice,  the sender and
receiver could share a seed for the random number generator from which
the check matrix is produced.

Fig.~\ref{fig:6} also reports the BLER and complexity performance of
GRANDAB and SRGRANDAB for [128,99] RLCs, so that the results are
directly comparable to those for RM codes. With hard detection ML
decoding by GRANDAB, it can be seen that RLCs slightly outperform
RM codes, leading to better BLER and comparable decoding complexity.
This result is potentially surprising as the re-randomization in
the RLC would lead one to suspect that some codes are poor performers,
but is consistent with theory that says that RLCs are typically good. The provision of symbol reliability
information changes matters and SRGRANDAB gets equally good performance
from both RM and RLCs. The use of RLCs, which necessitates 
a universal decoder, holds appeal as changing codebooks
may provide enhanced security, and our results suggest there is
no loss in terms of error performance in using them.

\begin{figure*}
\begin{center}
\vspace{-0.2in}
\includegraphics[width=0.35\textwidth]{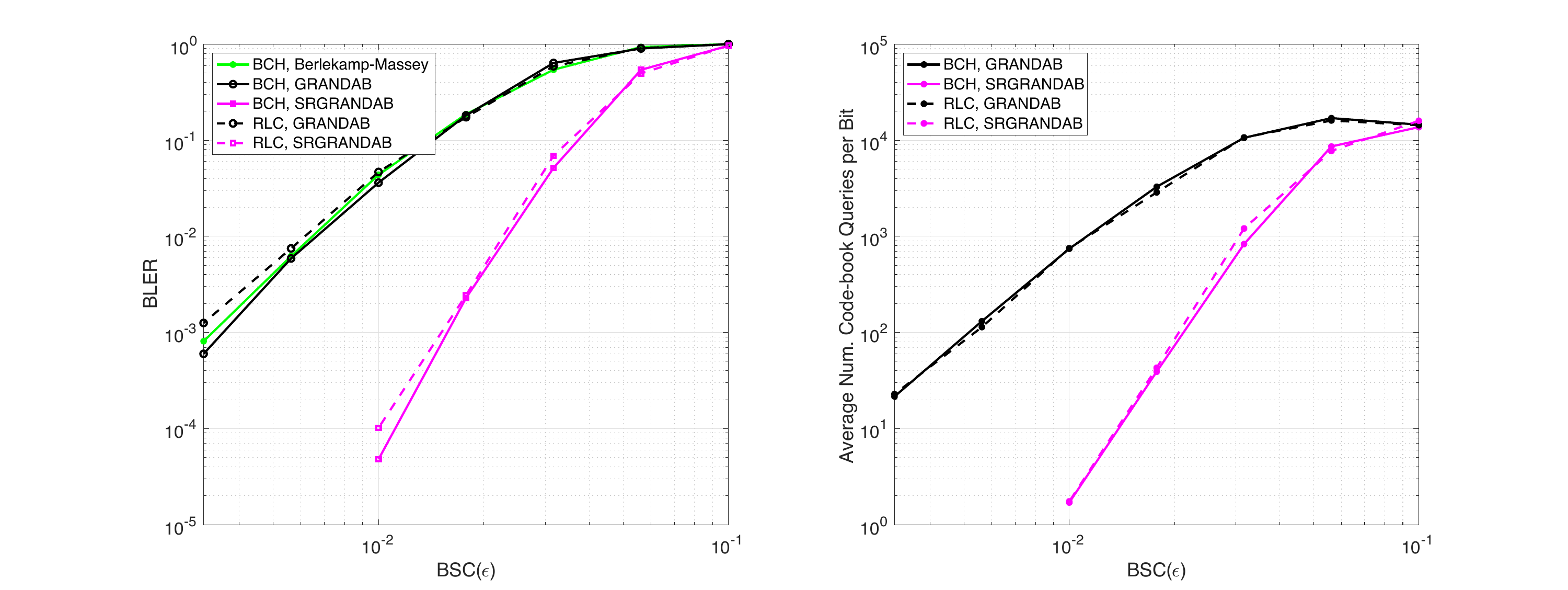}
\includegraphics[width=0.35\textwidth]{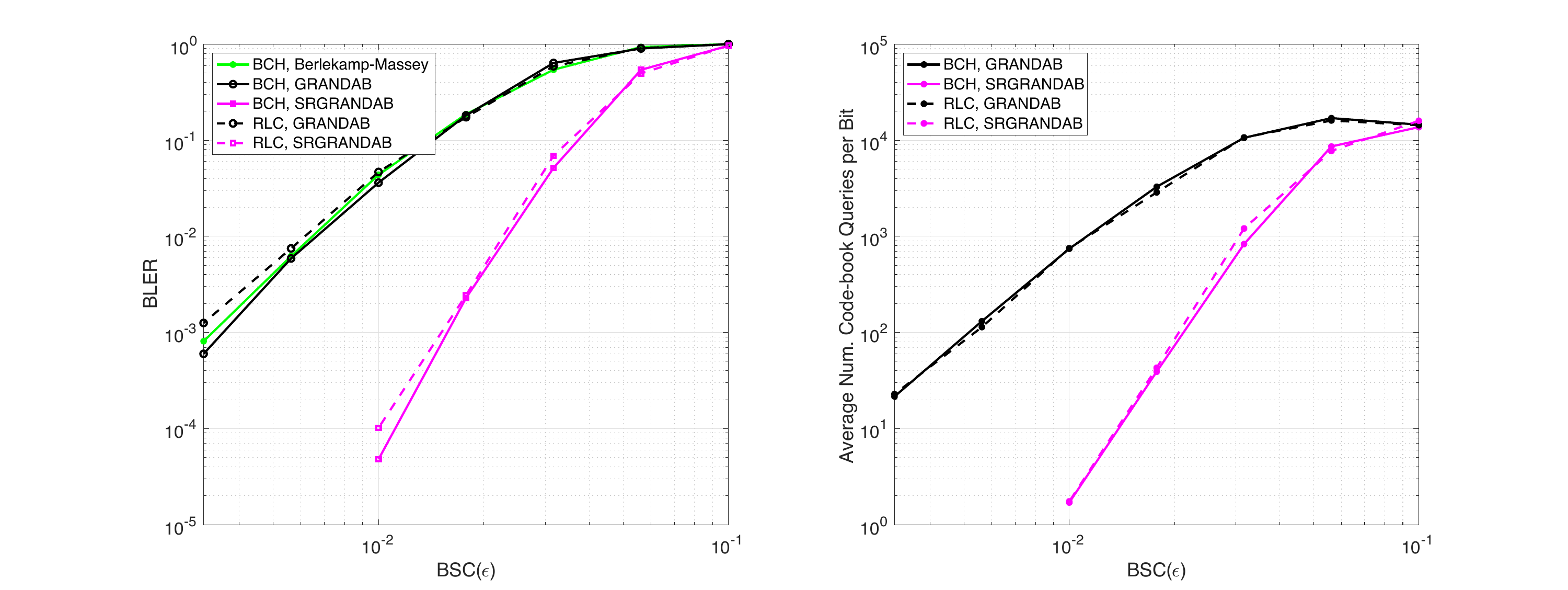}
\vspace{-0.2in}
\end{center}
\caption{
Performance with BCH and RLC $[127,106]$, rate $0.83$,
codes in an SR-BSC. Guessing order as in the upper panel of
Fig.~\ref{fig:6}. Left panel gives BLER performance for BM, GRANDAB and SRGRANDAB decoding of a BCH code, as well as
GRANDAB and SRGRANDAB decoding of RLCs. Right panel gives average
number of codebook queries per bit per decoding for GRANDAB and
SRGRANDAB.
\vspace{-0.2in}}
\label{fig:7}
\end{figure*}

For a rate $0.83$,
BCH $[127,106]$ Fig.~\ref{fig:7} reports BLER as a function of
$\epsilon$, as well as RLCs of the same rate. The results mirror those
found for RM codes, where the dedicated hard detection decoder
provides similar performance to the universal GRANDAB and the provision
of symbol reliability information leads SRGRANDAB to significantly outperform
both. As with RM codes, RLCs, which can only be efficiently decoded with
the GRAND approach, lead to similar BLERs as the BCH code
with essentially identical complexity for both. The latter is not
surprising as the complexity of GRANDAB and SRGRANDAB is largely dominated
by properties of the noise rather than those of the codebook. These results suggest that for BLER performance of moderate-redundancy codes,
the accuracy of the decoding mechanism is more important than the
codebook structure, opening up a rich palette of code sizes and
rates for URLLC in a single algorithmic instantiation.

\subsection{Quantizing Soft Information to Create Symbol Reliability Information}

The mathematical analysis assumes that the mask provided to the decoder, $s^n$, is correct, with symbols accurately tagged as reliable or unreliable. In practice, that requires binary quantization of soft information. 
Should quantization result in a symbol being reliable when it is not, that would necessarily result in an erroneous decoding or abandonment. Here we illustrate simple means by which mask creation can be achieved such that the frequency of provision of erroneous masks, the Mask Error Rate ($\MERR$), does not dominate the BLER. The masking rule is a function of the SNR, the length and redundancy of the code. 

Consider an AWGN with noise variance is $\sigma^2$ and a transmitter-receiver pair employing BPSK with transmitted the binary symbols corresponding to $\pm1$. We wish to identify a threshold, $\tau$, such that if the absolute value of a received signal is beyond $\tau$ it is likely to be reliable. Given $\tau$, the probability an individual bit
is erroneously labeled as reliable when it is incorrect is
$P(\sigma\NORMAL>1+\tau)$, where $\NORMAL$ is a Gaussian with mean zero and variance one. Thus, for a code of length $n$ we have that the likelihood one or more bits are erroneously marked as reliable, resulting in a mask error, is
$\MERR = 1-P(\sigma\NORMAL\leq 1+\tau)^n$. Hence, by setting a target $\MERR$ as a function of the code length and SNR such that the $\MERR$ will not dominate the BLER, the receiver determines the static signal threshold by $\tau = \sigma \Ninv((1-\MERR)^{1/n})-1$, where $\Ninv$ is the inverse of a Normal distribution. In addition, we set the mask so it always includes the $n-k$ least reliable bits, which allows SRGRAND to do a small amount of corrective work if necessary. In particular, that avoids circumstances where the reception is indicated to be error free, but the received demodulated signal is not in the codebook.

\label{sec:perf_awgn}
\begin{figure*}
\begin{center}
\includegraphics[width=0.32\textwidth]{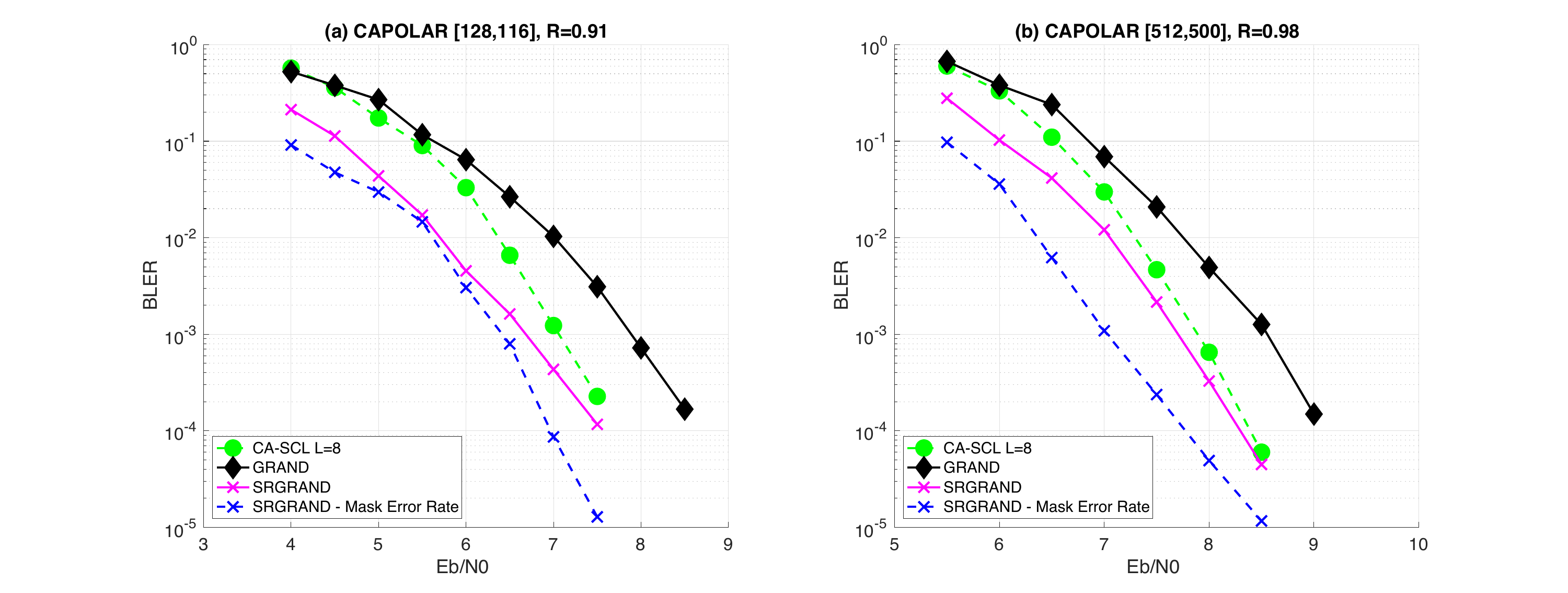}
\includegraphics[width=0.32\textwidth]{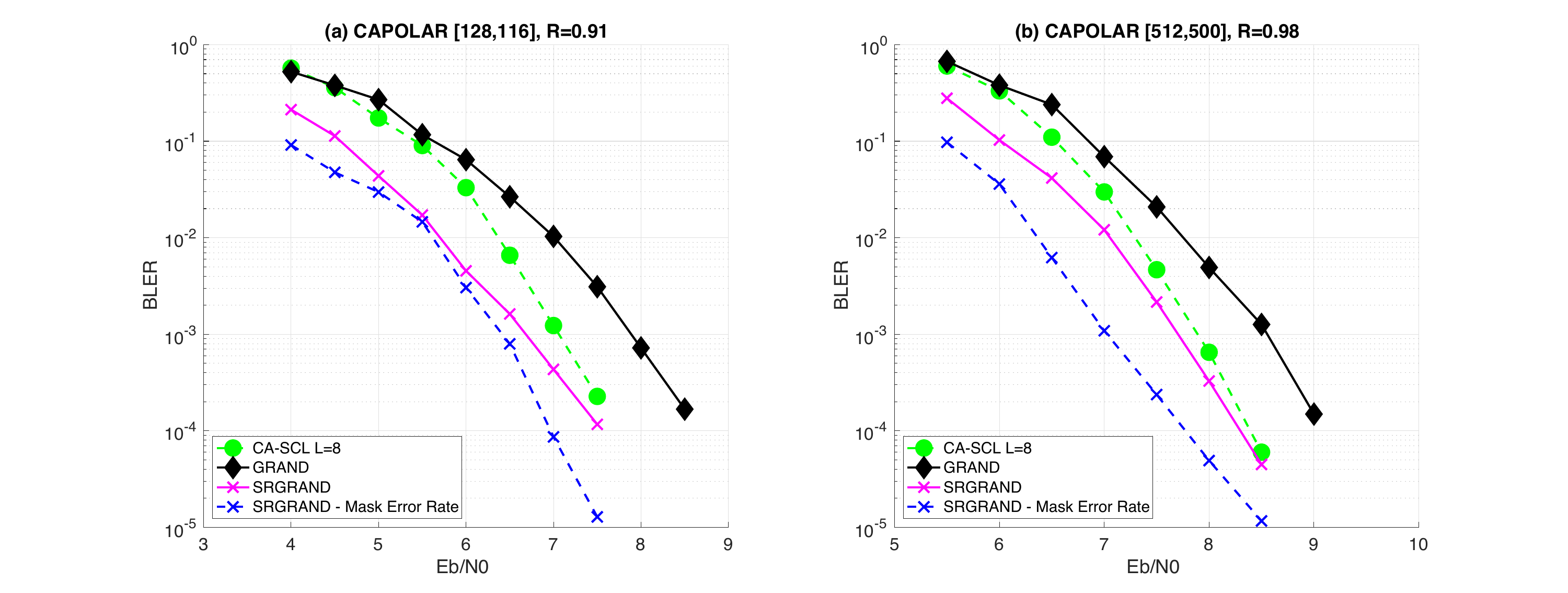}
\includegraphics[width=0.32\textwidth]{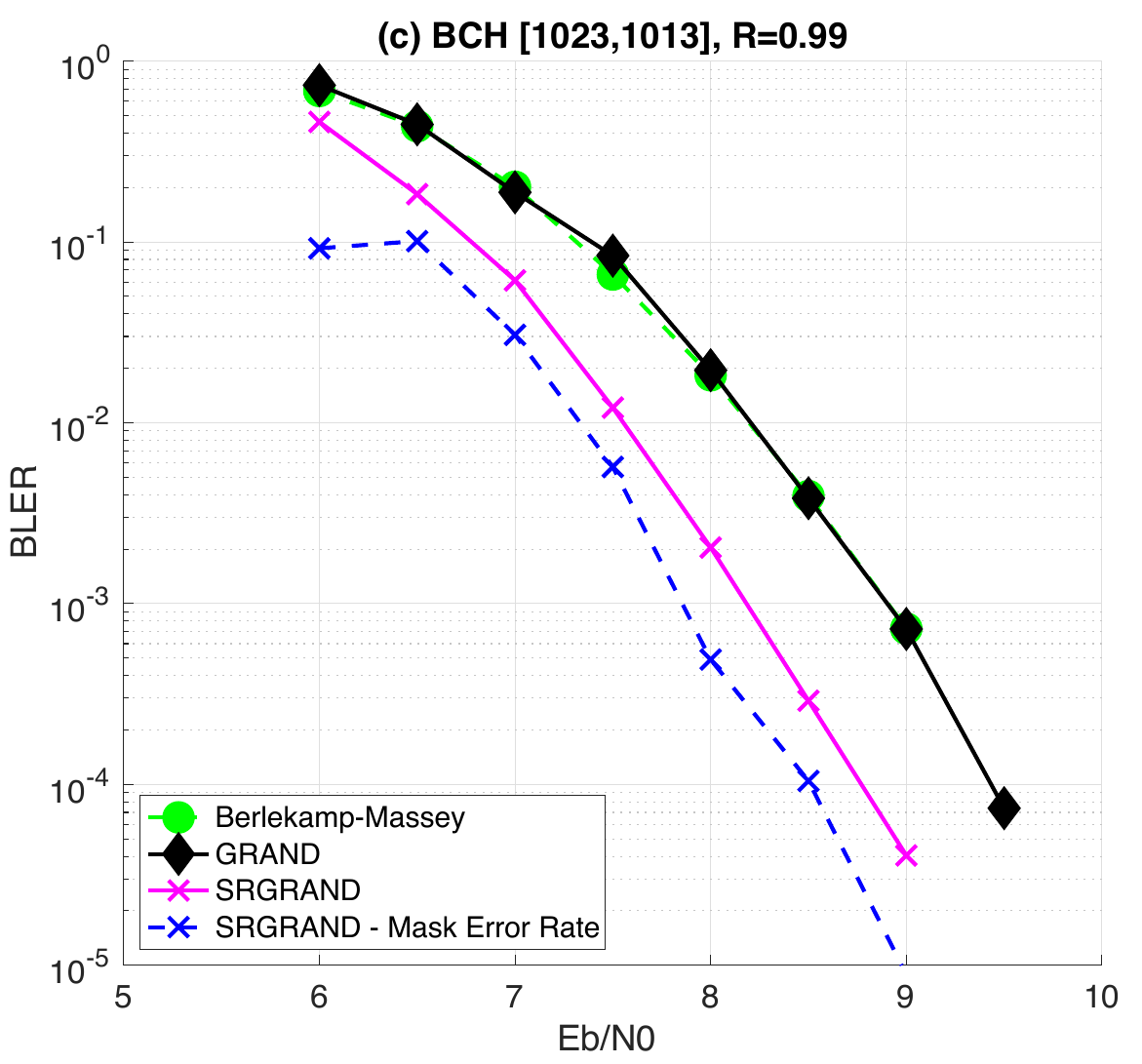}
\vspace{-0.2in}
\end{center}
\caption{
Performance evaluation in an AWGN with a threshold mask for CA-Polar [128,116], CA-Polar [512,500], and BCH [1023,1013] codes. Green lines correspond to full soft detection CA-SCL decoding of CA-Polar codes with a list size of eight, and hard detection BM decoding for the BCH code. Black lines correspond to ML GRANDAB decoding, magenta to SRGRAND, with blue lines indicating the contribution to SGRAND's BLER that comes from mask errors due to erroneous quantization of soft information.  \vspace{-0.2in}
}
\label{fig:8}
\end{figure*}

Figure \ref{fig:8} shows results of this system for two CA-Polar codes, [128,116] and [512,500], and a BCH [1023,1013] code.  Green lines correspond to full soft decoding with CA-SCL of CA-Polar codes \cite{Cassagne2019a} using a list-size of $8$ \cite{Zhaetal18, Chenetal19, Kesteletal20}, and BM decoding of the BCH code. Black lines indicate optimal ML decoding with hard detection GRAND. The blue lines indicate the contribution to BLER that comes from the masks being in error, $\MERR$, where static target mask error rates are determined in advance and used to identify the marking threshold $\tau$. The magenta lines report the overall SRGRAND BLER, inclusive of the $\MERR$. At a BLER of $10^{-3}$, these results demonstrate a BLER gain of 0.75 to 1dB can be obtained with SRGRAND over ML hard detection decoding, irrespective of code-length.

\section{Discussion}
\label{sec:discussion}

We have introduced SRGRAND and SRGRANDAB,   two noise-centric decoding algorithms using symbol reliability information.
By using the symbol reliability information to mask
symbols that are reliable and guessing noise only on unreliable
symbols,  these algorithms can realize higher rates, with lower error
probabilities, and less complexity, than without symbol reliability information.

All of the GRAND algorithms are suitable for use with any codebook
so long as testing membership of the codebook for a string of symbols
is efficient. For linear codes, such testing requires only a matrix multiplication over a finite
field. CRC codes, CA-Polar Codes and RLCs are all
linear.
Moreover,
guesswork orders are known to be robust to mismatch \cite{Sundaresan07},
and so decoding precision should not be sensitive to minor imprecision
in the channel noise model.

We empirically 
compared SRGRAND and SRGRANDAB with the well established majority logic decoding of RM
codes and BM decoding of a BCH code.  The provision of symbol reliability information
to SRGRANDAB results in substantially better performance.
As the algorithms are universal, they enable us to empirically
consider decoding RLCs, which is little explored outside of theory.
The BLER
performance is comparable with the highly structured RM and BCH codes of the same rate.
This opens the possibility of using SRGRANDAB for 
security, based on a principle of having the sender and
receiver use a distinct linear code drawn using a
cryptographically secure random number generator for each transmission.

While we presented results for one SRGRANDAB abandonment rule that
reduces average algorithmic complexity without sacrificing channel
capacity, others are possible and, indeed, can be used in combination.
Here we mention two more. The first is a natural extension to the
rule of abandoning guessing when coverage of the typical set for
the average number of potentially noise impacted symbols. In the
symbol reliability model, the specific number of potentially noise-impacted
symbols, $L^n$, for each received transmission, $Y^n$, is known to
the algorithm and querying is abandoned after $|\A|^{L^n(\HN+\delta)}$
guesses.
Analysis of the impact of this rule on error exponents and complexity
follows the same line of argument as presented in the paper, though
the resulting expressions are less elegant. A distinct alternative
is not to guess at all if too many symbols are reported to be
potentially noise impacted; i.e. if $L^n>n(\muL+\delta)$. It is
straight forward to show this rule does not impact capacity, but
an analysis of complexity, which would now be conditional on $L^n\leq
n(\muL+\delta)$, would not follow immediately from the large deviation
arguments presented here. The analysis in this paper for codes of
fixed length could, however, be readily extended to decoding with
symbol reliability information for variable length codes \cite{WV02, Wei03} and
rateless codes.

SRGRAND avails of symbol reliability information, which is
the most succinct form of soft information, and lends itself to both mathematical
analysis and implementation in hardware, seeing a $0.5$ to $0.75$dB gain 
over hard detection GRAND. A natural question is how to use more fine-grained
soft information in a GRAND algorithm, what the additional algorithmic
complexity would be, and what performance gains would be available.
By creating a bespoke noise effect query order for each reception, it is possible 
to use one real-valued piece of soft information per
bit to identify soft-detection ML decodings \cite{solomon20}. Although
the resulting algorithm does not lend itself to theoretical determination of
error exponents or to efficient implementation in hardware due to the need for
dynamic memory, a software implementation enables
the empirical evaluation of a bound on the achievable performance for a given
code. A heuristic algorithm that uses $\log_2(n)$ bits of soft information 
per received bit has been reported that appears to empirically approximate the 
performance available from full soft information with a simpler algorithm \cite{duffy21,An21}.
Again, its construction does not lend itself to mathematical identification of error exponents, 
but it is more suitable for implementation in hardware and an
architecture for it has been proposed \cite{abbas2021}, albeit one that is significantly more complex in terms
of energy and area than is the case for GRAND \cite{abbas2020high,Riaz21} or would be for SRGRAND. 
The question of whether SRGRAND could be augmented to avail of more finely quantized soft information while retaining
the simplicity of its operation remains outstanding.

The GRAND algorithms can themselves provide, in addition to
a decoding, soft information through the number of noise queries. A
lower number of guesses corresponds to a higher likelihood of correct
decoding. Such soft information can be of use, for example, for
component codes in a concatenated code or Turbo code \cite{HH89,
BAAF93,FBLH98,WH00}. Thus one may envisage using the information
on decoding reliability of SRGRAND and SRGRANDAB in a manner akin to
the reliability information provided by the Soft-Output Viterbi
Algorithm \cite{HH89,BAAF93, PRV96, FBLH98, GS00}, by the operation
of Turbo decoding \cite{WH00, CFR00, HP94,KImetal10, WP99, SSS04, johannesson2015fundamentals},
by the syndrome information used in Ordered Statistics Decoding
(OSD) \cite{FL95, KI02, MVFT03}, or other soft-input, soft-output
schemes \cite{MG11, FKU16, Wei03, SB10}. In general, we can envisage
in future work systems that meld equalization and decoding as in
\cite{Ung74} or soft information originating from other
decoding processes, \cite{For65,BM97, BDMP97, SM01, BKMP10}.

\vspace{-0,1in}
\bibliographystyle{IEEEtran}

\bibliography{SRGRAND}

\end{document}